\theoremstyle{plain}
\newtheorem{theorem}{Theorem}[section]
\newtheorem{proposition}{Proposition}[section]
\newtheorem{lemma}{Lemma}[section]
\theoremstyle{definition}
\newtheorem{example}{Example}
\newtheorem{assumption}{Assumption}[section]
\newtheorem{remark}{Remark}[section]
\renewcommand\thmcontinues[1]{Continued}
\newenvironment{conditionp}[1]{
  
  \conditionalt
}{\endconditionalt}
\newcommand{\indep}{\perp \!\!\! \perp}
\newcommand{\norm}[1]{\left\|#1\right\|}
\newcommand{\at}[1]{\ \bigg |_{#1}}
\NewDocumentCommand{\defmathletter}{m}{%
    \expandafter\newcommand\csname b#1\endcsname{\mathbb{#1}}%
    \expandafter\newcommand\csname c#1\endcsname{\mathcal{#1}}%
}
\NewDocumentCommand{\defmathletters}{>{\SplitList{,}}m}{\ProcessList{#1}{\defmathletter}}
\NewDocumentCommand{\defvector}{m}{%
    \expandafter\newcommand\csname v#1\endcsname{\mathbf{#1}}%
}
\NewDocumentCommand{\defvectors}{>{\SplitList{,}}m}{\ProcessList{#1}{\defvector}}
\title{Semiparametric Efficiency Gains From Parametric Restrictions on Propensity Scores}
\author{Haruki Kono \thanks{Email: hkono@mit.edu. 
I am grateful to Alberto Abadie and Whitney Newey for advising this project; to the associate editor and two anonymous referees for insightful review comments; to Vod Vilfort for careful proofreading; to Amy Finkelstein, Stephen Morris, and Victor Orestes for their guidance in the second year paper course; and to, Yifan Dai, Sivakorn Sanguanmoo and participants in MIT econometrics lunch seminar for their feedback and helpful discussions.}}
\affil{MIT}
\date{\monthyeardate\today}
\begin{document}

\maketitle

\begin{abstract}
We explore how much knowing a parametric restriction on propensity scores improves semiparametric efficiency bounds in the potential outcome framework.
For stratified propensity scores, considered as a parametric model, we derive explicit formulas for the efficiency gain from knowing how the covariate space is split.
Based on these, we find that the efficiency gain decreases as the partition of the stratification becomes finer.
For general parametric models, where it is hard to obtain explicit representations of efficiency bounds, we propose a novel framework that enables us to see whether knowing a parametric model is valuable in terms of efficiency even when it is high-dimensional.
In addition to the intuitive fact that knowing the parametric model does not help much if it is sufficiently flexible, we discover that the efficiency gain can be nearly zero even though the parametric assumption significantly restricts the space of possible propensity scores.
\end{abstract}

\section{Introduction}

Let $T$ be a treatment indicator such that $T = 1$ if an individual is treated, and $Y_1$ and $Y_0$ denote potential outcomes for $T = 1$ and $T = 0,$ respectively.
For each individual, we observe the treatment status $T,$ the realized outcome $Y = T Y_1 + (1 - T) Y_0,$ and a covariate vector $X,$ which takes values on some set $\cX.$
We consider the average treatment effect on the treated, $\mathrm{ATT} = \bE [Y_1 - Y_0 \mid T = 1].$
Suppose that the propensity score $p^\ast (x) = \bP(T = 1 \mid X = x)$ belongs to a finite-dimensional parametric family.
As usual, we assume the conditional independence and the overlap condition, which are formally defined in Assumption \ref{ass:unconf+overlap}.
We consider three different situations: (K) $p^\ast$ is known, (P) $p^\ast$ is known to belong to the parametric family, but the true parameter is unknown, and (UK) $p^\ast$ is completely unknown.
Let $V^k,$ $V^p,$ and $V^{uk}$ be the semiparametric efficiency bounds of the ATT under (K), (P), and (UK), respectively.
Clearly, the model under (P) is smaller than the model under (UK) because $p^\ast$ is parametrically specified.
On the other hand, the model under (P) is larger than the model under (K) because the true parameter is not available.
These observations imply $V^k \leq V^p \leq V^{uk}.$

This kind of comparison among the efficiency bounds for different models has received attention in causal inference.
\cite{hahn1998role} derives the formulas for $V^k$ and $V^{uk}$ and shows the strict inequality $V^k < V^{uk}$ holds generically, which is generalized to multivalued treatment environments by \cite{lee2018efficient}.
Also, \cite{chen2008semiparametric} derive the formula for $V^p$ in the missing data literature.
In this paper, we provide a quantitative and qualitative analysis of the impact of various parametric restrictions on the propensity score on the estimation efficiency of causal parameters defined via general moment conditions, allowing for multivalued treatments.

One of the key contributions of this paper is to obtain explicit formulas of the main components of the efficiency gains $V^{uk} - V^p$ and $V^p - V^k$ when the propensity score is known to be \textit{stratified}, that is, there is a known finite partition of the covariate space $\cX$ such that the propensity score is constant on each region.
Notice that this class of propensity scores can be seen as a parametric family parametrized by the treatment assignment probabilities on each region.
The stratified propensity score appears in many empirical studies as reviewed below, and importantly, it covers cases where the propensity score depends only on categorical covariates.
The formulas of the efficiency gains provide insightful implications.
First, once we know how the covariate space is split, we can match observations that lie in the same region of the partition, which enables the efficient estimation of the parameters of interest conditional on each region.
This leads to the efficiency improvement $V^{uk} - V^p,$ as it removes heterogeneity within each subclass.
If we know the values of the propensity score in each region in addition to how the space is split, we can efficiently aggregate the local estimates we obtain, which makes the other improvement $V^p - V^k.$
When the partition is coarse, i.e., the model is small, the gain from matching observations within the same region is dominant, and consequently, $V^p$ is close to $V^k.$
As the partition becomes finer, i.e., the model becomes larger, the matching effect decreases, so that knowing the partition becomes less beneficial, and $V^p$ and $V^{uk}$ become similar.

The idea that as the parametric model of the propensity score gets larger, the bound $V^p$ also gets larger should hold beyond the stratified propensity score.
Unfortunately, it is hard to obtain explicit formulas of efficiency gains for other parametric models, because $V^p$ involves a complicated matrix inversion, which is solvable only when the propensity score is stratified.
Instead of relying on such formulas, we introduce a new framework.
We first define a growing sequence of parametric models of the propensity score, keeping the ``structure'' of the model fixed.
We then consider efficiency bounds along the sequence.
We are interested in whether the limit of the sequence of efficiency bounds $V^p$ is equal to the nonparametric bound $V^{uk}.$
If it is, it implies that knowing the parametric structure is nearly equivalent to knowing nothing in terms of efficiency, especially when the model is high-dimensional.
Otherwise, it means that the parametric structure intrinsically narrows down the space of possible propensity scores.
Theorem \ref{thm:many-param-limit} shows that $V^p$ approaches $V^{uk}$ if and only if the set of scores of the parametric model approximates two specific functions that are composed of the propensity scores and the mean regression functions of moment functions.
This result implies the intuitive fact that if a parametric model is sufficiently flexible, then assuming it is almost equivalent to assuming a nonparametric model.
More importantly, it also tells us that as long as the condition of Theorem \ref{thm:many-param-limit} is satisfied, it may hold that $V^{uk} \approx V^p,$ even though the parametric model is not much flexible and does not approximate all functions in the nonparametric model.
In other words, there are restrictive parametric models under which propensity scores are almost ancillary for estimating parameters of interest.

\medskip

\noindent
\textit{Related Literature.}
In the binary treatment setup, the propensity score method is proposed by \cite{rosenbaum1983central} and \cite{rosenbaum1984reducing}.
\cite{hahn1998role} computes asymptotic variance bounds for the average treatment effect and the ATT and proposes efficient estimators of them by applying the theory of semiparametric estimation developed by \cite{newey1990semiparametric} and \cite{bickel1993efficient}.
\cite{hirano2003efficient} give another efficient estimator called the inverse probability weighted estimator, of which finite sample performances are investigated in \cite{herren2023true}.
For the quantile treatment effect, which is often of interest for measuring inequality, \cite{firpo2007efficient} obtains similar efficiency results.
In the missing data literature, \cite{chen2008semiparametric} derive the semiparametric efficiency bound for the parameter conditional on the treated subpopulation when the propensity score is correctly specified by a parametric model.

Theoretical studies on the estimation of treatment effects in a multivalued treatment setup are initiated by \cite{imbens2000role}, who generalizes the framework of Rosenbaum and Rubin.
For efficient estimation, \cite{cattaneo2010efficient} provides the efficient influence function and the semiparametric efficiency bound for parameters defined via general moment conditions, extending results given by \cite{hahn1998role} and \cite{hirano2003efficient}.
While \cite{cattaneo2010efficient} focuses on multivalued treatment effects on the whole population, \cite{farrell2015robust} considers the efficient estimation of the average treatment effect on the treated, and \cite{lee2018efficient} investigates efficiency bounds for multivalued treatment effects on subpopulation in general, but those for parametric propensity scores are not covered.

In a large fraction of empirical studies, the propensity score is modeled by a parametric model.
In observational studies, where the propensity score is not available to analysts, the propensity score is often specified by the logit or probit model.
See, for example, \cite{yang2016propensity} and \cite{imai2004causal}.
Even in experimental studies, an experiment designer often stratifies participants for the sake of estimation efficiency as in, for example, \cite{chong2016iron}, \cite{bugni2019inference} and \cite{hong2020inference}.
Propensity scores in stratified experiments can be thought of as being parameterized by assigning probabilities on each stratum.

\cite{hahn1998role} points out that knowing the propensity score affects the semiparametric efficiency bound of the ATT, while it is ancillary for the estimation of the average treatment effect.
\cite{frolich2004note} explains why this happens qualitatively, which is valid even in multivalued treatment cases.
The argument by \cite{chen2008semiparametric}, who examine the semiparametric efficiency bound when the propensity score is correctly specified by a parametric model in the binary setup, reveals that knowing the parametric model where the true propensity score lives improves the efficiency, but the amount of the improvement is unclear because the efficiency bound has an analytically complicated representation.
In Section \ref{sec:stratified-experiment}, we avoid this issue by focusing on the stratified experiment setup, which is analytically tractable.
It is also noteworthy that our results also include a direct answer to the conjecture raised in page 325 of \cite{hahn1998role}: does the knowledge about the propensity score matter in stratified experiments?

The experimental design literature, including \cite{cytrynbaum2021designing}, \cite{tabord2023stratification}, and \cite{bai2024efficiency} among others, has explored the optimal way to stratify the covariate space in order to achieve the efficient estimation by design.
In contrast, in Section \ref{sec:stratified-experiment}, we derive the efficiency bound for a fixed stratification and examine how the bound changes as the stratification becomes finer.

To the best of our knowledge, \cite{lee2018efficient} is the only study that investigates the value of the knowledge of the propensity score in the multivalued treatment setup.
The paper does so by considering how propensity scores are incorporated into the efficient influence function.
Although \cite{lee2018efficient} is the closest work to this paper, the paper models the ``partial knowledge'' of the propensity score differently than we do.
\cite{lee2018efficient} considers the cases where an analyst knows the propensity scores for some treatments but not for the others.
On the other hand, we think of the ``partial knowledge'' as in which parametric model the propensity score lives.
In this regard, \cite{lee2018efficient} and the present paper complement each other.

The theory we propose in Section \ref{sec:param-vs-nonparam} may look similar to the notion of sieves in that both consider a growing sequence of parametric models; see \cite{chen2007large} for a thorough review.
However, they have completely distinct goals.
Sieves are typically used when the target parameter is defined as the solution of an infinite-dimensional optimization problem, which is difficult to compute in general.
The sieve method considers a growing sequence of lower dimensional spaces that approximates the original space.
By solving the problem constrained on the approximating space, it provides flexible and robust estimators of the infinite-dimensional parameter.
On the other hand, this paper addresses the following question: if the true parameter lives in a low-dimensional parameter space, but an analyst assumes a high-dimensional model, how much efficiency would he or she lose?
Thus, we are interested in any increasing sequence of parametric models, including one that does not approximate the nonparametric model.

\section{Setup and Identification} \label{sec:setup}

Suppose that there are a finite number of treatments $\cT = \{0, 1, \cdots, J\}$ and corresponding potential outcomes $Y_0, Y_1, \cdots, Y_J \in \bR,$ assuming no interference.
Let $T$ be a $\cT$-valued random variable that indicates which treatment is assigned.
The observed outcome is, therefore, $Y \coloneqq \sum_{j \in \cT} D_j Y_j$ where $D_j \coloneqq \bI\{T = j\}$ and $\bI \{\cdot\}$ is the indicator function.
In addition, a random covariate vector $X,$ which is fully supported on a set $\cX,$ is also available to the analyst.
The dataset consists of $W \coloneqq (Y, X, T)$ of all participants.
Let $F_X$ and $F_{D, X}$ be the distributions of $X$ and $(D, X),$ respectively, which are not available to the analyst.

We describe the parameter of interest following \cite{cattaneo2010efficient} and \cite{lee2018efficient}.
Let $\cS \subset \cT$ be a nonempty set of treatment types.
The target parameter is $\beta_{\cS}^{\ast} = ((\beta_{0 \mid \cS}^\ast)^\prime, \cdots, (\beta_{J \mid \cS}^\ast)^\prime)^\prime \in \bR^{(J + 1) d_\beta}$ where $\beta_{j \mid \cS}^\ast \in \bR^{d_\beta}$ is defined as a solution of
\begin{align*}
    0
    =
    \bE[m(Y_j; \beta_{j \mid \cS}) \mid T \in \cS]
\end{align*}
for a known possibly non-smooth moment function $m : \bR \times \bR^{d_\beta} \to \bR^{d_m}$ where $d_\beta \leq d_m$ allowing for over-identification.
In what follows, we omit the subscript $\cS$ of parameters that indicates the conditioning set if it does not cause a confusion, so that $\beta_{\cS}$ and $\beta_{j \mid \cS}$ will be just written as $\beta$ and $\beta_j,$ respectively.

To identify the parameter, we impose the following assumption.
\begin{assumption} \label{ass:id}
For each $j \in \cT,$ $\beta_j = \beta_j^\ast$ is the unique solution of $\bE[m(Y_j; \beta_j) \mid T \in \cS] = 0.$
\end{assumption}

This framework unifies many interesting causal parameters.
In the binary treatment case $\cT = \{0, 1\},$ for example, the average treatment effect corresponds to $m(y_j, \beta_j) = y_j - \beta_j$ and $\cS = \cT.$
Similarly, one can deal with the ATT by considering $\cS = \{1\}.$
For $\tau \in (0, 1),$ the moment function $m(y_j, \beta_j) = \bI\{y_j \leq \beta_j\} - \tau$ leads to the quantile treatment effect at $\tau$th quantile.
Also, suppose that there are three treatments: $\cT = \{0, 1, 2\},$ and that a policy maker is considering abolishing treatments $0$ and $1$ and transferring people who are assigned to them to treatment $2.$
In this case, she would be interested in $\bE[Y_2 \mid T \in \{0, 1\}],$ the average potential outcome of treatment $2$ for those with $0$ and $1.$

The propensity score $p_j^\ast(x) \coloneqq \bP(T = j \mid X = x)$ is assumed to be correctly specified by a $d_\gamma$-dimensional smooth parametric model $\gamma \mapsto p_j(\cdot; \gamma).$
That is, $p_j^\ast (\cdot) = p_j(\cdot; \gamma^\ast)$ for some $\gamma^\ast.$
In abuse of notation, we also denote $p_j(\gamma) \coloneqq \bE[p_j(X; \gamma)]$ and $p_j^\ast \coloneqq p_j(\gamma^\ast).$
For $\cS \subset \cT,$ let $p_{\cS} \coloneqq \sum_{j \in \cS} p_j$ and $D_{\cS} \coloneqq \sum_{j \in \cS} D_j.$

Following the literature, we assume the ignorability.
\begin{assumption} \label{ass:unconf+overlap}
\mbox{}
\begin{description}
    \item[(Unconfoundedness)] $T \indep Y_j \mid X$ for all $j \in \cT,$
    \item[(Overlap)] There exists $p_{\mathrm{min}} > 0$ such that $p_j^\ast (X) > p_{\mathrm{min}}$ almost surely for all $j \in \cT.$
\end{description}
\end{assumption}

The first condition ensures that one can compare outcomes with different treatments conditioning on covariates.
The second condition is sufficient for identifying the parameter and for the semiparametric efficiency bound to be finite.
Under Assumption \ref{ass:unconf+overlap}, the parameter of interest is identified as
\begin{align*}
    0
    =
    \bE \left[e_j^\ast (X; \beta_j) \frac{p_{\cS}^{\ast} (X)}{p_{\cS}^{\ast}}\right]
    \Leftrightarrow
    \beta_j = \beta_j^\ast
\end{align*}
where $e_j^\ast (X; \beta_j) \coloneqq \bE [m(Y; \beta_j) \mid X, T = j].$
Note that since $p_{\cS}^{\ast}$ is constant, it is irrelevant to the identification.

The score function $S_j$ for the parametric model of the propensity score is defined as
\begin{align*}
    S_j (x; \gamma)
    \coloneqq
    \frac{\partial}{\partial \gamma}
    \log p_j (x; \gamma)
    .
\end{align*}
Similarly, let $S_{\cS} (x; \gamma) \coloneqq \frac{\partial}{\partial \gamma} \log p_{\cS, \gamma} (x).$
Note that $\sum_{j \in \cS} S_j \neq S_{\cS}$ in general.
Let $S_j^\ast (x) \coloneqq S_j (x; \gamma^\ast).$
For the regularity of this parametric propensity score, we assume the following condition.
For a column vector $a,$ define $a^{\otimes 2} \coloneqq a a^\prime.$
\begin{assumption} \label{ass:full-rank-parametric}
$\bE \left[\sum_{j \in \cT} D_j S_j^\ast (X)^{\otimes 2} \right]$ exists and is invertible.
\end{assumption}
This condition essentially requires that the Fisher information of the parametric model is invertible, which guarantees that the parameterization $\gamma \mapsto p_j(\cdot; \gamma)$ is not degenerate, i.e., no components of $\gamma$ are redundant.
A similar condition is assumed in Theorem 3 of \cite{chen2008semiparametric}.

As in \cite{cattaneo2010efficient}, we impose the following assumption.

\begin{assumption} \label{ass:full-rank-gradient}
For each $j \in \cT,$ it holds that $\bE \left[ \norm{m (Y_j; \beta_j^\ast)}^2 \right] < \infty,$ where the Euclidean norm is denoted by $\norm{\cdot},$ and
\begin{align*}
    \cJ_j
    \coloneqq
    \frac{\partial}{\partial \beta_j}
    \at{\beta_j = \beta_j^\ast}
    \bE\left[m(Y_j; \beta_j)\mid T \in\cS\right]
    \in
    \bR^{d_m \times d_\beta}
\end{align*}
is column full-rank.
\end{assumption}
This is another condition for the finiteness of the efficiency bound.
Note that we implicitly assume that $\bE\left[m(Y_j; \beta_j)\mid T \in\cS\right]$ is differentiable in $\beta_j,$ although the moment function $m$ itself may be non-smooth.

\section{Semiparametric Efficiency Bounds} \label{sec:speb}

\subsection{Derivation}
In this section, we derive the efficient influence function and semiparametric efficiency bound for $\beta$ in the framework offered by \cite{bickel1993efficient} and \cite{hahn1998role}.

Fix $\cS \subset \cT.$
Define functions $s_j,$ $c_j,$ and $t_j$ as
\begin{gather*}
    s_j(W; \beta_j, e_j, \gamma)
    \coloneqq
    \frac{p_{\cS}(X; \gamma)}{p_{\cS} (\gamma) p_j(X; \gamma)}
    (m(Y; \beta_j) - e_j(X; \beta_j))
    ,
    \\
    c_j(\beta_j, e_j, \gamma)
    \coloneqq
    \frac{1}{p_{\cS} (\gamma)}
    \bE \left[
        e_j(X; \beta_j)
        \sum_{i \in \cS}
        D_i
        S_i (X; \gamma)^\prime
    \right]
    \bE \left[
        \left(
            \sum_{i \in \cT} D_i S_i(X; \gamma)
        \right)^{\otimes 2}
    \right]^{-1}
    ,
    \\
    t_j(W; \beta_j, e_j, \gamma)
    \coloneqq
    \frac{p_{\cS}(X; \gamma)}{p_{\cS} (\gamma)}
    e_j(X; \beta_j)
    .
\end{gather*}
With these functions in hand, let $F^p \coloneqq ((F_0^p)^\prime, \dots, (F_J^p)^\prime)^\prime,$ where
\begin{align*}
    F_j^p (W)
    \coloneqq
    D_j
    s_j(W; \beta_j^\ast, e_j^\ast, \gamma^\ast)
    +
    c_j(\beta_j^\ast, e_j^\ast, \gamma^\ast) 
    \sum_{i \in \cT} 
    D_i
    S_i^\ast(X)
    +
    t_j(W; \beta_j^\ast, e_j^\ast, \gamma^\ast)
\end{align*}
for $j \in \cT.$

The three functions, $s_j,$ $c_j,$ and $t_j$ are key components of the efficient influence function of $\beta$ that is derived in Theorem \ref{thm:eff-bound} below.
They correspond to the scores of the distribution of $Y$ conditional on $X$ and $T = j,$ the propensity score, and the marginal distribution of $X,$ respectively.

Also, let
\begin{align*}
    \cJ
    \coloneqq
    \mathrm{diag} (\cJ_0, \cdots, \cJ_J)
    \in
    \bR^{(J + 1) d_m \times (J + 1) d_\beta}
    ,
\end{align*}
which is the matrix that has $\cJ_0, \cdots, \cJ_J$ on its block diagonal.
Note that $\cJ$ is column full-rank under Assumption \ref{ass:full-rank-gradient}.
The following theorem gives the efficient influence function and efficiency bound of $\beta.$

\begin{theorem} \label{thm:eff-bound}
Suppose that Assumptions \ref{ass:id}-\ref{ass:full-rank-gradient} hold.
The efficient influence function of $\beta$ is
\begin{align*}
    \psi^p (W) 
    \coloneqq
    - 
    \left(\cJ^\prime \bE \left[F^p (W)^{\otimes 2}\right]^{-1} \cJ\right)^{-1}
    \cJ^\prime 
    \bE \left[F^p (W)^{\otimes 2}\right]^{-1} 
    F^p (W)
    .
\end{align*}
Consequently, the semiparametric efficiency bound is 
\begin{align*}
    V^p
    \coloneqq
    \left(
        \cJ^\prime \bE \left[F^p (W)^{\otimes 2}\right]^{-1} \cJ
    \right)^{-1}
    .
\end{align*}
\end{theorem}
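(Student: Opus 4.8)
The plan is to apply the projection characterization of the semiparametric efficiency bound of \cite{bickel1993efficient} and \cite{hahn1998role}, organizing the argument in two stages: first I construct the efficient influence function of the stacked moment functional $\mu(\beta) = (\mu_0(\beta_0), \ldots, \mu_J(\beta_J))$ with $\mu_t(\beta_t) = \bE[m(Y_t; \beta_t) \mid T \in \cS]$, and then I propagate it through the estimating-equation map to obtain the bound for $\beta$ itself.

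I would begin by factoring the likelihood of $W = (Y, X, T)$ into the marginal density of $X$, the conditional treatment probabilities $p_j(X; \gamma)$, and the conditional outcome laws $f_{Y \mid X, T = j}$, which induces an orthogonal decomposition of the tangent space into $\cT_X \oplus \cT_\gamma \oplus \cT_{Y \mid X, T}$. Here $\cT_X$ consists of the mean-zero square-integrable functions of $X$, $\cT_{Y\mid X,T}$ of the square-integrable $b(Y,X,T)$ with $\bE[b \mid X, T] = 0$, and---this is the feature that separates the present setting from the nonparametric-propensity analysis of \cite{cattaneo2010efficient}---$\cT_\gamma$ is the \emph{finite-dimensional} span $\{(\sum_{j \in \cT} D_j S_j^\ast(X))' v : v \in \bR^{d_\gamma}\}$. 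Assumption \ref{ass:full-rank-parametric} ensures the information matrix $\bE[(\sum_{j \in \cT} D_j S_j^\ast)^{\otimes 2}]$ is invertible, so the orthogonal projection onto $\cT_\gamma$ is well defined.

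The heart of the argument is to show that $F^p$ is the efficient influence function of $\mu$ by exhibiting it as a sum of three pieces, one lying in each tangent component, and then verifying that it is a gradient. The inverse-probability-weighted residual $D_t \tilde s_t$ lies in $\cT_{Y\mid X,T}$, since $m(Y;\beta_t^\ast) - e_t^\ast(X;\beta_t^\ast)$ has zero conditional mean given $(X, T = t)$; the regression term $\tilde t_t = p_\cS(X) e_t^\ast(X;\beta_t^\ast)/p_\cS^\ast$ lies in $\cT_X$, its mean vanishing precisely because $\mu_t(\beta_t^\ast) = 0$; and the correction $\tilde c_t \sum_{j \in \cT} D_j S_j^\ast$ lies in $\cT_\gamma$ by construction. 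I would then compute the pathwise derivative of $\mu_t$ along an arbitrary regular submodel and check that $F_t^p$ reproduces it through the inner product $\bE[F_t^p \cdot s]$ for every score $s$. Because $F_t^p$ already sits inside the tangent space, satisfying this gradient identity is exactly what certifies it as the efficient influence function. The coefficient $\tilde c_t$ emerging from this computation is forced to be the population least-squares coefficient of $e_t^\ast(X;\beta_t^\ast)$ (entering through the treatments in $\cS$) on the $\gamma$-score, normalized by $p_\cS^\ast$, which is precisely the stated formula for $c_t$.

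Finally, I would pass from $\mu$ to $\beta$. Since $\beta^\ast$ solves the possibly over-identified system $\mu(\beta) = 0$ with $d_m \geq d_\beta$ and Jacobian $\cJ$ (block-diagonal and row full-rank under Assumption \ref{ass:full-rank-gradient}), the efficiency bound for $\beta$ is obtained by optimally combining the efficient moment influence function: writing $\Sigma = \bE[(F^p)^{\otimes 2}]$, the efficient influence function is $\psi^p = -(\cJ' \Sigma^{-1} \cJ)^{-1} \cJ' \Sigma^{-1} F^p$ with variance $V^p = (\cJ' \Sigma^{-1} \cJ)^{-1}$, the weighting $\Sigma^{-1}$ being the one that attains the bound under over-identification, exactly as in \cite{cattaneo2010efficient}. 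I expect the projection onto $\cT_\gamma$, and the verification that it produces precisely $c_t$, to be the main obstacle: this is the only step where the parametric restriction on the propensity score changes the computation relative to the nonparametric benchmark, and it requires cleanly isolating the $\gamma$-direction of each perturbation from the outcome and covariate directions. A secondary technical point is that $m$ may be nonsmooth, so all differentiation in $\beta_t$ must be performed on the conditional expectation $e_t^\ast$ rather than on $m$ itself, which is licensed by the smoothness implicitly assumed in Assumption \ref{ass:full-rank-gradient}.
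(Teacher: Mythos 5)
Your proposal is correct and follows essentially the same route as the paper: the same likelihood factorization into outcome laws, parametric propensity score, and covariate density; the same tangent-space decomposition $\cH_1 \oplus \cH_2 \oplus \cH_3$ with the finite-dimensional $\gamma$-direction spanned by $\sum_{j \in \cT} D_j S_j^\ast(X)$; the gradient identity $\partial_\theta \bE_\theta[m(Y_t;\beta_t^\ast) \mid T \in \cS] = \bE[F_t^p s_\theta]$; and the optimal-weighting step (the paper implements it by substituting $A = \cJ^\prime \bE[F^p(W)^{\otimes 2}]^{-1}$ into the over-identified moment expansion). If anything, your plan is slightly more explicit than the paper's proof in verifying that each piece of $F^p$ lies in its tangent component, which is what certifies the gradient as the efficient influence function.
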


This result is a direct generalization of Theorem 3 of \cite{chen2008semiparametric} to the multivalued treatment environment.
Indeed, when the treatment variable is binary, i.e., $\cJ = \{0, 1\},$ our result coincides with theirs.
Recall that \cite{lee2018efficient} considers the efficiency in the cases where the propensity score is partially known, i.e., $p_j^\ast$ is known for some $j$ but not for the others.
On the other hand, we model the ``partial knowldge'' of the propensity score by imposing a parametric assumption.

Efficiency bounds are usually used to see whether a given estimator is semiparametrically efficient, i.e., no regular estimator has a smaller asymptotic variance than it does.
In addition to this, they can be used to construct efficient estimators.
For example, \cite{cattaneo2010efficient} proposes an efficient estimator of treatment effects based on the moment condition induced by the efficient influence function.
Estimators based on efficient influence functions are often called debiased machine learning estimators, and \cite{chernozhukov2018double} and \cite{chernozhukov2022locally} show that they are efficient in many examples.
These estimators achieve the efficiency by debiasing biases that arise from estimating nuisance parameters, such as propensity scores.
Efficient influence functions are useful for constructing efficient estimators, especially when nuisance parameters are high-dimensional and difficult to estimate.

\subsection{Examples}

\begin{example}
Consider $m(y_j; \beta_j) = y_j - \beta_j.$ 
The propensity score is assumed to be correctly specified by a parametric model $p_j (\cdot; \gamma).$
The efficiency bound of $\beta_j^\ast = \bE [Y_j \mid T \in \cS]$ is
\begin{align*}
    V_j^p
    =
    \frac{1}{(p_{\cS}^{\ast})^2}
    \left\{
        \bE \left[
            \frac{p_{\cS}^{\ast} (X)^2}{p_j^\ast (X)}
            \sigma_j^2 (X)
        \right]
        +
        \bar c_j^\ast
        \bE \left[
            \sum_{i \in \cT} D_i S_i^\ast (X)^{\otimes 2}
        \right]^{-1}
        (\bar c_j^\ast)^\prime
        +
        \bE \left[
            p_{\cS}^{\ast} (X)^2 (\beta_j^\ast (X) - \beta_j^\ast)^2
        \right]
    \right\}
    ,
\end{align*}
where $\beta_j^\ast (X) = \bE [Y_j \mid X],$ $\sigma_j^2 (X) = \bV (Y_j \mid X),$ where $\bV$ denotes the conditional variance operator, and 
\begin{align*}
    \bar c_j^\ast
    =
    \bE \left[
        (\beta_j^\ast (X) - \beta_j^\ast)
        \sum_{i \in \cS} D_i S_i^\ast (X)^\prime
    \right]
    .
\end{align*}
In particular, for the case of $\cT = \{0, 1\}$ and $\cS = \{1\},$ we obtain the efficiency bound for the ATT, $\beta_{1 \mid \{1\}} - \beta_{0 \mid \{1\}},$
\begin{align} \label{eq:efficiency-bound-ATT}
    V_{\mathrm{ATT}}^p
    =
    &
    \frac{1}{(p_1^\ast)^2}
    \Bigg\{
        \bE \left[
            p_1^\ast (X)
            \sigma_1^2 (X)
        \right]
        +
        \bE \left[
            \frac{p_1^\ast (X)^2}{p_0^\ast (X)}
            \sigma_0^2 (X)
        \right]
        \\
        \nonumber
        +
        &
        \left(\bar c_{1}^\ast - \bar c_{0}^\ast\right)
        \bE \left[
            \sum_{j \in \cT} D_j S_j^\ast (X)^{\otimes 2}
        \right]^{-1}
        \left(\bar c_{1}^\ast - \bar c_{0}^\ast\right)^\prime
        \\
        \nonumber
        +
        &
        \bE \left[
            p_1^\ast (X)^2 (\beta_1^\ast (X) - \beta_0^\ast (X) - (\beta_1^\ast - \beta_0^\ast))^2
        \right]
    \Bigg\}
    .
\end{align}
This gives a counterpart of Theorem 1 of \cite{hahn1998role} when there is a parametric restriction on the propensity score.
\end{example}

\begin{example}
For $\tau \in (0, 1),$ consider $m (y_j; \beta_j) =\bI\{y_j \leq \beta_j\} - \tau.$
Then, the moment condition $0 = \bE [m(Y_j; \beta_j^\ast) \mid T \in \cS]$ implies that $\beta_j^\ast$ is the $\tau$-quantile of $Y_j$ conditional on $T \in \cS.$
For a parametric propensity score $p_j (\cdot; \gamma),$ the efficiency bound of $\beta_j^\ast$ is
\begin{align*}
    V_j^p
    =
    &
    \frac{1}{(f_{j \mid T \in \cS} (\beta_j^\ast) p_{\cS}^{\ast})^2}
    \Bigg\{
        \bE \left[
            \frac{p_{\cS}^{\ast} (X)^2}{p_j^\ast (X)}
            \bV (\bI\{Y_j \leq \beta_j^\ast\} \mid X)
        \right]
        \\
        +
        &
        \bar c_j^\ast
        \bE \left[
            \sum_{i \in \cT} D_i S_i^\ast (X)^{\otimes 2}
        \right]^{-1}
        (\bar c_j^\ast)^\prime
        +
        \bE \left[
            p_{\cS}^{\ast} (X)^2
            (\bP (Y_j \leq \beta_j^\ast \mid X) - \tau)^2
        \right]
    \Bigg\}
\end{align*}
where $f_{j \mid T \in \cS}$ is the density of the distribution of $Y_j$ conditional on $T \in \cS$ and
\begin{align*}
    \bar c_j^\ast
    =
    \bE \left[
        (\bP (Y_j \leq \beta_j \mid X) - \tau)
        \sum_{i \in \cS} D_i S_i^\ast (X)^\prime
    \right]
    .
\end{align*}
When $\cT = \{0, 1\}$ and $\cS = \{1\},$ a similar calculation yields the efficiency bound for the quantile treatment effect on the treated, $\beta_{1 \mid \{1\}} - \beta_{0 \mid \{1\}}:$
\begin{align*}
    V_{\textrm{QTT}}^p
    =
    &
    \frac{1}{(p_1^\ast)^2}
    \Bigg\{
        \bE \left[
            p_1^\ast (X)
            \bV \left(\frac{\bI\{Y_1 \leq \beta_1^\ast\}}{f_{1 \mid T \in \cS} (\beta_1^\ast)} \mid X\right)
            +
            \frac{p_1^\ast (X)^2}{p_0^\ast (X)}
            \bV \left(\frac{\bI\{Y_0 \leq \beta_0^\ast\}}{f_{0 \mid T \in \cS} (\beta_0^\ast)} \mid X\right)
        \right]
        \\
        +
        &
        \left(\frac{\bar c_{1}^\ast}{f_{1 \mid T \in \cS} (\beta_1^\ast)} - \frac{\bar c_{0}^\ast}{f_{0 \mid T \in \cS} (\beta_0^\ast)}\right)
        \bE \left[
            \sum_{j \in \cT} D_j S_j^\ast (X)^{\otimes 2}
        \right]^{-1}
        \left(\frac{\bar c_{1}^\ast}{f_{1 \mid T \in \cS} (\beta_1^\ast)} - \frac{\bar c_{0}^\ast}{f_{0 \mid T \in \cS} (\beta_0^\ast)}\right)^\prime
        \\
        +
        &
        \bE \left[
            p_1^\ast (X)^2
            \left(
                \frac{\bP (Y_1 \leq \beta_1^\ast \mid X) - \tau}{f_{1 \mid T \in \cS} (\beta_1^\ast)}
                - 
                \frac{\bP (Y_0 \leq \beta_0^\ast \mid X) - \tau}{f_{0 \mid T \in \cS} (\beta_0^\ast)}
            \right)^2
        \right]
    \Bigg\}
    .
\end{align*}
This is an analog of equation (10) of \cite{firpo2007efficient} when the propensity score is parametrically specified.
\end{example}

\subsection{Basic Comparison} \label{subsec:comparison}

We compare the efficiency bound for a parametric model derived in Theorem \ref{thm:eff-bound} with the bounds for the cases where the propensity score is known or unknown.
In this subsection, we overview preliminary comparison results.

According to Corollary 1 of \cite{lee2018efficient}, the counterparts of $F^p$ when the propensity score is known or unknown are $F^k \coloneqq ((F_0^k)^\prime, \cdots, (F_J^k)^\prime)^\prime$ and $F^{uk} \coloneqq ((F_0^{uk})^\prime, \cdots, (F_J^{uk})^\prime)^\prime$ where
\begin{align} \label{eq:EIF-known}
    F_j^k (W)
    \coloneqq
    D_j 
    s_j(W; \beta_j^\ast, e_j^\ast, \gamma^\ast)
    +
    t_j(W; \beta_j^\ast, e_j^\ast, \gamma^\ast)
    ,
\end{align}
and
\begin{align} \label{eq:EIF-nonparametric}
    F_j^{uk} (W)
    \coloneqq
    D_j 
    s_j(W; \beta_j^\ast, e_j^\ast, \gamma^\ast)
    +
    \frac{1}{p_{\cS}^{\ast}}
    (D_{\cS} - p_{\cS}^{\ast} (X)) e_j^\ast (X; \beta_j^\ast)
    +
    t_j(W; \beta_j^\ast, e_j^\ast, \gamma^\ast)
    .
\end{align}
Therefore, the efficiency bounds are
\begin{align*}
    V^k
    \coloneqq
    \left(
        \cJ^\prime \bE \left[F^k (W)^{\otimes 2}\right]^{-1} \cJ
    \right)^{-1}
    \ \text{and} \ 
    V^{uk}
    \coloneqq
    \left(
        \cJ^\prime \bE \left[F^{uk} (W)^{\otimes 2}\right]^{-1} \cJ
    \right)^{-1}
    ,
\end{align*}
respectively.

In what follows throughout the paper, we assume the just-identification, i.e., $d_m = d_\beta.$
Although we are interested in the relationship among $V^k,$ $V^p$ and $V^{uk},$ we can focus on the comparison among $\bE \left[F^k (W)^{\otimes 2}\right],$ $\bE \left[F^p (W)^{\otimes 2}\right],$ and $\bE \left[F^{uk} (W)^{\otimes 2}\right],$ as for two positive semidefinite matrices $W_1$ and $W_2,$ it holds
\begin{align} \label{eq:derivative-not-matter}
    W_1
    \geq 
    W_2
    \Leftrightarrow
    \left(\cJ^\prime W_1^{-1} \cJ\right)^{-1}
    \geq
    \left(\cJ^\prime W_2^{-1} \cJ\right)^{-1}
    ,
\end{align}
where for two symmetric matrices $A$ and $B,$ we define $A \geq B$ if and only if $A - B$ is positive semidefinite.
Considering the complexity of the models, it is qualitatively obvious that 
\begin{align} \label{eq:comparison-baseline}
    \bE \left[F^k (W)^{\otimes 2}\right] 
    \leq 
    \bE \left[F^p (W)^{\otimes 2}\right] 
    \leq 
    \bE \left[F^{uk} (W)^{\otimes 2}\right]
    .
\end{align}

First, consider the case of $\cS = \cT.$
It holds that $F^p = F^k$ because it holds $c_j(\beta_j^\ast, e_j^\ast, \gamma^\ast) = 0,$ since
\begin{align*}
    \bE \left[
        e_j^\ast (X; \beta_j^\ast)
        \sum_{i \in \cS}
        D_i
        S_i^\ast (X)^\prime
    \right]
    =
    \bE \left[
        e_j^\ast (X; \beta_j^\ast)
        \left(
            \frac{\partial}{\partial \gamma} 
            \at{\gamma = \gamma^\ast}
            \sum_{i \in \cT}
            p_i (X; \gamma)
        \right)^\prime
    \right]
    =
    0
    .
\end{align*}
We also have $F^{uk} = F^p,$ as $D_{\cS} - p_{\cS}^{\ast} (X) = 0.$
These observations lead to a well-known result: the relationship (\ref{eq:comparison-baseline}) holds all with equality.
In other words, knowing the propensity score does not improve the efficiency, as discovered by \cite{hahn1998role} and \cite{lee2018efficient}.

In what follows, we assume $\cS \subsetneq \cT$ unless stated otherwise.
The connection between $F^p$ and $F^{uk}$ can be understood geometrically.
\begin{proposition} \label{prop:projection-of-IF}
The function $F^p$ is the $L^2$-projection of $F^{uk},$ that is, for each $j \in \cT,$ the unique solution of
\begin{align*}
    \min_{c_j \in \bR^{d_m \times d_\gamma}}
    \bE \left[
        \norm{
            F_j^{uk} (W)
            -
            \left(
                D_j
                s_j(W; \beta_j^\ast, e_j^\ast, \gamma^\ast)
                +
                c_j
                \sum_{i \in \cT} 
                D_i 
                S_i^\ast(X)
                +
                t_j(W; \beta_j^\ast, e_j^\ast, \gamma^\ast)
            \right)
        }^2
    \right]
\end{align*}
is given by $c_j = c_j (\beta_j^\ast, e_j^\ast, \gamma^\ast).$
\end{proposition}

The orthogonality between $F^p$ and $F^{uk} - F^p$ admits the decomposition
\begin{align*}
    \bE \left[F^{uk} (W)^{\otimes 2}\right]
    =
    \bE \left[F^p (W)^{\otimes 2}\right]
    +
    \bE \left[(F^{uk} (W) - F^p (W))^{\otimes 2}\right]
    .
\end{align*}
Hence, if the minimization problem of Proposition \ref{prop:projection-of-IF} does not attain zero, which is the case in most cases, then the second inequality of (\ref{eq:comparison-baseline}) strictly holds.

\section{Efficiency Gains in Stratified Experiments} \label{sec:stratified-experiment}
In this section, we take a quantitative approach to investigate how much a parametric restriction on the propensity score improves the efficiency bound of parameters of interest in stratified experiments.
When the propensity score is known to be stratified, the efficient influence function and the semiparametric efficiency bound turn out to have closed forms.
After deriving analytic formulas, we discuss efficiency gains from knowing the restriction on the propensity score.

Let us first define the class of stratified propensity scores.
To do so, fix a measurable partition $\cX = \bigsqcup_{k = 1}^K \cX_k$ of the space of covariate vectors.
Consider the following class of propensity scores
\begin{align*}
    p_j(x)
    =
    \sum_{k = 1}^K p_{j, k} \bI\{x \in \cX_k\}
\end{align*}
for some constants $p_{j, k}.$
We call this form of propensity score a $K$-stratified propensity score with partition $\bigsqcup_{k = 1}^K \cX_k.$
This class is parameterized by $(JK)$-parameters that live in
\begin{align*}
    \left\{
        (p_{j, k})_{1 \leq j \leq J, 1 \leq k \leq K}
        \ \bigg | \
        p_{j, k}
        \geq 
        0
        ,
        \
        \sum_{j = 1}^J p_{j, k}
        \leq
        1
    \right\}
    .
\end{align*}
The probabilities for $j = 0$ are specified as $p_{0, k} = 1 - \sum_{j = 1}^J p_{j, k}$ for $k = 1, \cdots, K.$
Let $p_j^\ast (x) = \sum_{k = 1}^K p_{j, k}^\ast \bI\{X \in \cX_k\}$ be the true propensity score.
We denote $p_{\cS, k}^\ast \coloneqq \sum_{j \in \cS} p_{j, k}^\ast.$

The class of stratified propensity scores covers many important situations.
For example, it includes cases where the propensity score is known to depend only on categorical covariates such as race, gender, month of birth, city of residence, and so on.
Even when covariates are continuous, people often discretize them in practice, which also falls within the scope of the current setting.

As in Section \ref{subsec:comparison}, let $V^k, V^p$ and $V^{uk}$ be the efficiency bounds of $\beta$ when the propensity score is known, known to be $K$-stratified by partition $\bigsqcup_{k = 1}^K \cX_k,$ and unknown, respectively.
We are interested in the relationship among these bounds, but thanks to (\ref{eq:derivative-not-matter}), we focus on the comparison among $\bE \left[F^k(W)^{\otimes 2}\right],$ $\bE \left[F^p (W)^{\otimes 2}\right],$ and $\bE \left[F^{uk} (W)^{\otimes 2}\right].$
We regard $\Delta_K^{uk \to p} \coloneqq \bE \left[F^{uk} (W)^{\otimes 2}\right] - \bE \left[F^p (W)^{\otimes 2}\right]$ as the efficiency gain from knowing that the propensity score is $K$-stratified, and similarly, $\Delta_K^{p \to k} \coloneqq \bE \left[F^p (W)^{\otimes 2}\right] - \bE \left[F^k (W)^{\otimes 2}\right]$ is the value of knowing the propensity score on each subclass.
The stratification structure allows us to obtain analytical formulas of these quantities.
\begin{theorem} \label{thm:formula-efficiency-gain}
It holds that
\begin{align*}
    \Delta_K^{uk \to p}
    =
    \frac{1}{(p_{\cS}^{\ast})^2}
    \sum_{k = 1}^K
    p_{\cS, k}^\ast
    (1 - p_{\cS, k}^\ast)
    \bP (X \in \cX_k)
    \bV \left(
        \begin{pmatrix}
            e_0^\ast (X; \beta_0^\ast) \\
            \vdots \\ 
            e_J^\ast (X; \beta_J^\ast)
        \end{pmatrix} 
        \ \bigg | \
        X \in \cX_k
    \right)
    ,
\end{align*}
and
\begin{align*}
    \Delta_K^{p \to k}
    =
    \frac{1}{(p_{\cS}^{\ast})^2}
    \sum_{k = 1}^K
    p_{\cS, k}^\ast
    (1 - p_{\cS, k}^\ast)
    \bP (X \in \cX_k)
    \left\{
        \bE \left[
            \begin{pmatrix}
                e_0^\ast (X; \beta_0^\ast) \\
                \vdots \\ 
                e_J^\ast (X; \beta_J^\ast)
            \end{pmatrix} 
            \ \bigg | \
            X \in \cX_k
        \right]
    \right\}^{\otimes 2}
    .
\end{align*}
\end{theorem}

\begin{remark}
The formulas in Theorem \ref{thm:formula-efficiency-gain} are explicit in the sense that the matrix inversion that appears in $c_j (\beta_j^\ast, e_j^\ast, \gamma^\ast)$ is resolved.
This happens only for the stratified propensity score and cannot be extended to other parametric models.
The reason is as follows.
Recall that we are interested in the inverse of 
\begin{align*}
    \bE \left[
        \left(
            \sum_{i \in \cT}
            D_i
            S_i^\ast (X)
        \right)^{\otimes 2}
    \right]
    =
    \sum_{i \in \cT}
    \bE \left[
        D_i
        S_i^\ast (X)^{\otimes 2}
    \right]
    .
\end{align*}
In general, it is difficult to find an explicit formula for the inverse of the sum of matrices, like the one in the RHS, and this causes a problem in most parametric models.
For the stratified propensity score, however, the sum can be nicely decomposed into diagonal matrices, and it turns out to be explicitly invertible using the Woodbury formula.
This calculation exploits the property of stratified experiments that the parameter $p_{j, k}$ affects the propensity score only on the $k$th cell $\cX_k.$
See the Supplementary Material for more technical details.
\end{remark}

As we have discussed the case of $\cS = \cT$ in Section \ref{subsec:comparison}, we focus on $\cS \subsetneq \cT$ in what follows.
Notice that a $1$-stratified propensity score coincides with a constant propensity score, in which cases treatments are assigned completely at random.
As the two formulas in Theorem \ref{thm:formula-efficiency-gain} suggest, for $K = 1,$ it holds
\begin{align*}
    \Delta_1^{uk \to p}
    =
    \frac{1 - p_{\cS}^{\ast}}{p_{\cS}^{\ast}}
    \bE \left[
        \begin{pmatrix}
            e_0^\ast (X; \beta_0^\ast) \\
            \vdots \\ 
            e_J^\ast (X; \beta_J^\ast)
        \end{pmatrix}^{\otimes 2} 
    \right]
    >
    0
    ,
    \ \ \text{and} \ \
    \Delta_1^{p \to k}
    =
    0
    .
\end{align*}
An implication of this result is that the knowledge that the propensity score is constant over the whole covariate space strictly improves the efficiency, i.e., $\Delta_1^{uk \to p} > 0,$ but full information of the propensity score does not bring any additional value, i.e., $\Delta_1^{p \to k} = 0.$
This is not the case for $K > 1,$ because $\Delta_K^{p \to k}$ is strictly positive in general. In other words, knowing the specific values of the propensity score does improve the efficiency.
These observations generalize \cite{hahn1998role} who finds a similar fact in binary treatment cases.

In the stratified experiment setup, the value $\Delta_K^{uk \to p}$ of imposing a parametric restriction captures the \textit{in-class variance} while the additional value $\Delta_K^{p \to k}$ of pinning down the propensity score embodies the \textit{between-class bias}.
Too see this, suppose first that we know that the true propensity score is $K$-stratified by a given partition.
In this case, data in each subclass can be thought of as sample from a random assignment.
It holds that
\begin{align*}
    \bE [m (Y_j; \beta_j^\ast) \mid T \in \cS, X \in \cX_k]
    =
    \bE [m (Y_j; \beta_j^\ast) \mid X \in \cX_k]
    =
    \bE [e_j^\ast (X; \beta_j^\ast) \mid X \in \cX_k]
    ,
\end{align*}
that is, the moment for subpopulation in each subclass can be estimated based on all data points including observations to which the treatments in $\cS$ are not assigned.
Since all information in each subclass is exploited in this way, the knowledge of the partition removes the in-class efficiency loss.
Note that \cite{frolich2004note} provides a similar observation for binary average treatment effects on the treated.

Even if the functional form of the propensity score is known, there may be another source of efficiency loss that we call the between-class bias.
To see this, notice that
\begin{align*}
    \bE [m (Y_j; \beta_j^\ast) \mid T \in \cS]
    =
    \sum_{k = 1}^K
    \left(
        \frac{p_{\cS, k}^\ast}{p_{\cS}^{\ast}}
        \bP (X \in \cX_k)
    \right)
    \bE [e_j^\ast (X; \beta_j^\ast) \mid X \in \cX_k]
    .
\end{align*}
This decomposition implies that the moment function of interest is an aggregation of local conditional expectations in each subclass with a weight depending on the propensity score.
Therefore, one cannot reproduce the moment function efficiently without knowing the exact values of the propensity score, which leads to efficiency loss.
There is no loss in the absence of knowledge of the propensity score only when $\bE [e_j^\ast (X; \beta_j) \mid X \in \cX_k]$ is constant or $K = 1.$
The former case does not hold generically. 
The latter is consistent with the result from \cite{hahn1998role}, who shows that knowing that the propensity score is constant is the same as knowing its value in terms of efficiency, but our result also indicates that $K = 1$ is just a special case in that the between-bias is zero despite the fact that the exact value of the propensity score is unavailable.

At the end of this section, we consider the behavior of the efficiency gain $\Delta_K^{uk \to p}$ from knowing the partition when it is infinitely fine.
Intuitively, if cells of the partition are tiny, just knowing the partition is almost useless because the in-class variance of each cell is small.
This observation is justified in the following proposition by considering a nested sequence of partitions.
Assume $\cX$ is a subset of a Euclidean space, and let $\mathrm{vol} (\cdot)$ be the Lebesgue measure on it.

\begin{proposition} \label{prop:limit-stratified-propsnsoty-score-fine}
Let $K (n)$ be an increasing sequence of natural numbers.
For each $n \geq 1,$ let $\{\cX_k^{(n)}\}_{k = 1}^{K (n)}$ be a partition of $\cX$ satisfying the following condition:
for any $n \geq 2$ and $k \in \{1, \dots, K (n)\},$ there uniquely exists $k^\prime \in \{1, \cdots, K (n - 1)\}$ such that $\cX_{k^\prime}^{(n - 1)} \supset \cX_k^{(n)}$ and $\lim_{n \to \infty} \max_{1 \leq k \leq K (n)} \mathrm{vol} (\cX_k^{(n)}) = 0.$
Then, it holds that $\lim_{n \to \infty} \Delta_{K (n)}^{uk \to p} = 0.$
\end{proposition}

On the other hand, if the partition is known to be coarse, the efficiency is improved, as the following proposition claims.

\begin{proposition} \label{prop:limit-stratified-propsnsoty-score-coarse}
In the same setup as Proposition \ref{prop:limit-stratified-propsnsoty-score-fine}, assume $\lim_{n \to \infty} \max_{1 \leq k \leq K (n)} \mathrm{vol} (\cX_k^{(n)}) > 0.$
Then, it holds that $\lim_{n \to \infty} \Delta_{K (n)}^{uk \to p} > 0$ if $e_j^\ast (\cdot; \beta_j^\ast)$ is not constant on any open ball in $\cX.$
\end{proposition}

The assumption of Proposition \ref{prop:limit-stratified-propsnsoty-score-coarse} describes the situation where it is known that there is a unignorable region where assignments are completely at random.
In this case, the propensity score is restricted to the space of functions that are constant on the region, which leads to an efficiency gain.

Propositions \ref{prop:limit-stratified-propsnsoty-score-fine} and \ref{prop:limit-stratified-propsnsoty-score-coarse} are shown from the formulas that are derived in Theorem \ref{thm:formula-efficiency-gain}, but they can also be proven using the general theory in the next section.
See the Supplementary Material for details in this direction.

\section{Efficiency Gains from Large Parametric Models} \label{sec:param-vs-nonparam}
\subsection{General Theory}
We have shown in Section \ref{sec:speb} that imposing a parametric restriction on the propensity score weakly improves estimation efficiency, but as shown in Propositions \ref{prop:limit-stratified-propsnsoty-score-fine} and \ref{prop:limit-stratified-propsnsoty-score-coarse}, the amount of the efficiency gain varies depending on the parametric model assumed.
In order to see what kinds of parametric restrictions are valuable in terms of efficiency, it is helpful to compute the reduction $\bE \left[F^{uk} (W)^{\otimes 2}\right] - \bE \left[F^{p} (W)^{\otimes 2}\right]$ in the efficiency bound as in Section \ref{sec:stratified-experiment}.
Unfortunately, however, it is hard to find an explicit formula of this quantity in general because it involves a convoluted matrix inversion in $\bE \left[F^{p} (W)^{\otimes 2}\right].$
In this section, we take a more conservative approach, instead of calculating it explicitly.
Specifically, we investigate how the amount of the efficiency improvement changes as the parametric model becomes infinitely ``large.''
To formally study asymptotics for parametric models, we first define a countably infinite-dimensional model, which is thought of as the ``structure'' of the parametric model, and then restrict it to obtain a sequence of finite-dimensional parametric submodels in a way that they are nested.
We consider efficiency bounds along this sequence and see whether their limit attains the bound for the unknown propensity score.
If it does, then it implies that imposing the parametric model is almost the same as estimating the propensity score nonparametrically, especially when $n$ is large, so that the parametric restriction does not help significantly in terms of efficiency.
If the nonparametric bound is not attained even asymptotically, on the other hand, it means that the parametric structure intrinsically restricts the space of possible propensity scores.

As we have seen in the previous section, when $\cS = \cT,$ there is no efficiency gain from knowing the propensity score. Since the object of interest here is the difference $\bE \left[F^{uk} (W)^{\otimes 2}\right] - \bE \left[F^{p} (W)^{\otimes 2}\right],$ we assume $\cS \subsetneq \cT$ throughout this section. We also assume $((e_0^\ast)^\prime, \dots, (e_J^\ast)^\prime)^\prime \not \equiv 0;$ otherwise, the knowledge about the propensity score does not matter either.

To describe situations where the number of parameters grows with the ``structure'' of a parametric model fixed, we consider a nested family of parametric models that is induced by a large base model.
Let $\Theta^\bN$ be a countably infinite-dimensional parameter space where $\Theta \subset \bR,$ which defines a base model of the propensity score, $\cM_\infty^p \coloneqq \left\{(p_j (\cdot; \gamma))_{j \in \cT} \mid \gamma \in \Theta^\bN\right\}.$
Let $d (n) \nearrow \infty$ be an increasing sequence.
Define a $d (n)$-dimensional parametric submodel of $\cM_\infty^p$ as
\begin{align*}
    \cM_n^p
    \coloneqq
    \left\{
        (p_j (\cdot; \gamma))_{j \in \cT}
        \mid
        \gamma \in \Theta^\bN
        , \ 
        k \geq d (n) + 1
        \Rightarrow
        \gamma_k = 0
    \right\}
    .
\end{align*}
That is, the submodel $\cM_n^p$ is parameterized by the first $d (n)$ elements of $\gamma,$ which is denoted by $\gamma^{(n)} = (\gamma_1, \dots, \gamma_{d (n)}).$
It is obvious that these models are nested in the sense that
\begin{align*}
    \cM_{1}^p
    \subset
    \cM_{2}^p
    \subset
    \dots
    \subset
    \cM_\infty^p
    \subset
    \cM^{uk}
    ,
\end{align*}
where $\cM^{uk}$ is the set of all possible propensity scores.
Moreover, assume that the true propensity score $(p_j^\ast)_{j \in \cT} = (p_j(\cdot; \gamma^\ast))_{j \in \cT}$ is of finite dimension, i.e., $\gamma^\ast = (\gamma_1^\ast, \dots, \gamma_{d (N^\ast)}^\ast, 0, 0, \dots)$ for some $N^\ast.$

Let $V^{p, n}$ be the semiparametric efficiency bound in Theorem \ref{thm:eff-bound} when the propensity score is assumed to lie in $\cM_n^p.$
Then, $V^{p, n}$ is nondecreasing in $n$ in the sense of positive semidefinite matrix since $\cM_n^p$ is nondecreasing.
It is also bounded above by $V^{uk},$ which is the semiparametric efficiency bound for $\cM^{uk},$ so that the sequence $V^{p, n}$ converges to some positive definite matrix denoted by $V^{p, \infty}.$ 
This is shown in the Supplementary Material.
It holds in general that $V^{p, n} \nearrow V^{p, \infty} \leq V^{uk}.$

The score of the parametric model $\cM_n^p$ is
\begin{align*}
    S_j^{\ast n} (x)
    \coloneqq
    \frac{\partial}{\partial \gamma^{(n)}}
    \at{\gamma = \gamma^\ast}
    \log p_j (x; \gamma)
    .
\end{align*}
Also, define a $d (n) \times J$ matrix valued function
\begin{align*}
    \bS_n
    \coloneqq
    \left(
        \frac{\partial}{\partial \gamma_i} 
        \at{\gamma = \gamma^\ast}
        \log p_j (\cdot; \gamma)
    \right)_{1 \leq i \leq d (n), 1 \leq j \leq J}
    =
    \begin{pmatrix}
        S_1^{\ast n} & \dots & S_J^{\ast n}
    \end{pmatrix}
    .
\end{align*}

For $j \in \cT$ and $\ell \in \{1, \dots, d_m\},$ let
\begin{align*}
    \bar \ve_{j, \ell} (X)
    \coloneqq
    \begin{pmatrix}
        (\bI\{1 \in \cS\} - p_{\cS}^{\ast} (X)) e_{j, \ell}^\ast (X; \beta_j^\ast)
        &
        \dots
        &
        (\bI\{J \in \cS\} - p_{\cS}^{\ast} (X)) e_{j, \ell}^\ast (X; \beta_j^\ast)
    \end{pmatrix}
\end{align*}
where $e_j^\ast = (e_{j, 1}^\ast, \dots, e_{j, d_m}^\ast)^\prime.$
Note that $\bar \ve_{j, \ell}$ is a vector consisting of at most two functions.
In particular, its entries indexed by $i \in \cS$ have $(1 - p_{\cS}^{\ast} (\cdot)) e_{j, \ell}^\ast (\cdot; \beta_j^\ast),$ and the others have $- p_{\cS}^{\ast} (\cdot) e_{j, \ell}^\ast (\cdot; \beta_j^\ast).$

Now, we introduce the following condition.
\begin{conditionp}{F} \label{cond:score-is-flexible}
For any $j \in \cT$ and $\ell \in \{1, \dots, d_m\},$ it holds that
\begin{align*}
    \lim_{n \to \infty}
    \inf_{c_n \in \bR^{d (n)}}
    \norm{c_n^\prime \bS_n - \bar \ve_{j, \ell}}_{(L^2(F_X))^J}
    =
    0
    .
\end{align*}
\end{conditionp}

As we see in the Supplementary Material, the two functions $(1 - p_{\cS}^{\ast} (\cdot)) e_{j, \ell}^\ast (\cdot; \beta_j^\ast)$ and $- p_{\cS}^{\ast} (\cdot) e_{j, \ell}^\ast (\cdot; \beta_j^\ast),$ which form $\bar \ve_{j, \ell},$ appear in the efficient influence function of $\beta$ under the nonparametric model $\cM^{uk}.$
It is also shown that the function $c_n^\prime \bS_n$ is the counterpart of $\bar \ve_{j, \ell}$ under the parametric model $\cM_n^p.$
Therefore, Condition \ref{cond:score-is-flexible} implies that the sequence of efficient influence functions of the increasing parametric models converges to that of the nonparametric model.

It is intuitive that the efficiency bound of the limit parametric model $\cM_\infty^p$ is equal to that of the nonparametric model if the parametric model is sufficiently ``flexible.''
Condition \ref{cond:score-is-flexible} can be thought of as a criterion of such flexibility and is shown to characterize when a parametric restriction on the propensity score does \textit{not} improve the efficiency bound asymptotically.
\begin{theorem} \label{thm:many-param-limit}
Condition \ref{cond:score-is-flexible} holds if and only if $V^{p, \infty} = V^{uk}.$
\end{theorem}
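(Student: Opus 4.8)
The plan is to reduce the comparison of the two bounds to a statement about the middle ($\cH_2$) term of the influence-function moments, and then match that statement to Condition \ref{cond:score-is-flexible}. Write $F^{uk} = A + h^{uk}$ and, for the limiting model, $F^{p,\infty} = A + h^{p,\infty}$, where $A$ collects the outcome-residual term (in $\cH_1$) and the conditional-mean term (in $\cH_3$) common to all three influence functions, and $h^{p,\infty}$ is the $L^2$-projection of $h^{uk}$ onto the closed span $\cH_2^{p,\infty} \coloneqq \overline{\bigcup_n \cH_2^{p,n}}$. Because the decomposition $\dot\cP = \cH_1\oplus\cH_2\oplus\cH_3$ from the proof of Theorem \ref{thm:eff-bound} is orthogonal, every cross term between $A$ and the $\cH_2$-pieces vanishes, so with $M^{uk} \coloneqq \bE[F^{uk}(W)^{\otimes 2}]$, $M_n \coloneqq \bE[F^{p,n}(W)^{\otimes 2}]$ (the latter being $\bE[F^p(W)^{\otimes2}]$ for the model $\cM_n^p$), and the residual $r \coloneqq h^{uk}-h^{p,\infty}$, one obtains $M^{uk} - M_\infty = \bE[r^{\otimes 2}]\ge 0$, where $M_\infty \coloneqq \lim_n M_n$ exists by monotonicity and boundedness. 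Continuity of matrix inversion on positive-definite matrices then gives $V^{p,\infty} = (\cJ^\prime M_\infty^{-1}\cJ)^{-1}$, so by \eqref{eq:derivative-not-matter} the equality $V^{p,\infty}=V^{uk}$ is equivalent to $\cJ^\prime M_\infty^{-1}\cJ = \cJ^\prime (M^{uk})^{-1}\cJ$. Since $M_\infty^{-1}-(M^{uk})^{-1}\ge 0$, the kernel characterization of positive-semidefinite forms gives $(M_\infty^{-1}-(M^{uk})^{-1})\cJ = 0$, and writing this difference as $M_\infty^{-1}\bE[r^{\otimes2}](M^{uk})^{-1}$ together with the elementary fact that $\bE[r^{\otimes2}]v=0$ forces $v^\prime r = 0$ a.s. reduces the whole problem to the single clean statement
\[
    V^{p,\infty}=V^{uk}
    \iff
    \cJ^\prime (M^{uk})^{-1} r = 0 \ \text{a.s.}
\]

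Next I would identify Condition \ref{cond:score-is-flexible} with the vanishing of $r$. Using $D_\cS - p_\cS^\ast(X) = \sum_{j\in\cT} D_j(\bI\{j\in\cS\}-p_\cS^\ast(X))$, each scalar entry of $h^{uk}$ is $h^{uk}_{tk} = \tfrac{1}{p_\cS^\ast}\sum_{j\in\cT} D_j[\bar\ve_{tk}(X)]_j$, while a generic element of $\cH_2^{p,n}$ has the form $\sum_{j\in\cT} D_j (c_n^\prime S_j^{\ast n})$ with a \emph{single} $c_n$ shared across $j$ — matching the common-coefficient structure in Condition \ref{cond:score-is-flexible}. The $L^2$ inner product on such functions is the $p_j^\ast$-weighted inner product on the coefficient tuples, and by Overlap this weighted norm is equivalent to the unweighted $(L^2(F_X))^J$ norm in Condition \ref{cond:score-is-flexible}. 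Hence $h^{uk}_{tk}\in\cH_2^{p,\infty}$ for all $t,k$ — i.e.\ $r=0$ — exactly when every $\bar\ve_{tk}$ lies in the closed span of $\bigcup_n \mathrm{row}(\bS_n)$, which is Condition \ref{cond:score-is-flexible}. The ``if'' direction is then immediate: Condition \ref{cond:score-is-flexible} gives $r=0$, hence $M_\infty = M^{uk}$ and $V^{p,\infty}=V^{uk}$.

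The ``only if'' direction is where the real work lies. By the reduction above I must show that $\cJ^\prime (M^{uk})^{-1}r = 0$ a.s.\ forces $r=0$. When the moment condition is just-identified this is immediate, since $\cJ$ is then square and invertible and $\cJ^\prime (M^{uk})^{-1}$ is a bijection; the genuine difficulty is the over-identified case $d_\beta < d_m$, where $\cJ^\prime$ has a nontrivial kernel and the compression $\cJ^\prime(M^{uk})^{-1}$ could in principle annihilate a nonzero residual lying in the ``over-identifying'' directions of the moment space (an abstract positive-definite configuration realizing exactly this is easy to write down). The main obstacle is therefore to prove that no such cancellation occurs for the residuals arising here, i.e.\ that $\cJ^\prime(M^{uk})^{-1}$ is injective on the space where $r$ lives. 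I expect to close this by exploiting the concrete structure rather than treating $M^{uk}$ and $\cJ$ as arbitrary: $r$ sits in the treatment-score space $\cH_2$ and is orthogonal to the $\cH_1$ and $\cH_3$ components that generate the over-identifying directions of $M^{uk}$, while $h^{uk}$ (hence $r$) carries the product form (treatment factor)$\times e_t^\ast(X)$ and $\cJ_t = \tfrac{1}{p_\cS^\ast}\bE[p_\cS^\ast(X)\,\partial_{\beta_t} e_t^\ast(X;\beta_t^\ast)]$ is tied to the same conditional-mean functions; tracking these relations through $(M^{uk})^{-1}$ should show a nonzero $r\in\cH_2$ cannot be hidden in $\ker\cJ^\prime$, which completes the equivalence.
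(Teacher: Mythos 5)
Your overall architecture is the paper's own: the orthogonal decomposition $\dot \cP = \cH_1 \oplus \cH_2 \oplus \cH_3$, viewing $h^{p,n}$ as the $L^2$-projection of $h^{uk}$ onto $\cH_2^{p,n}$, Pythagoras for the moment matrices, and the identification of Condition \ref{cond:score-is-flexible} with vanishing of the residual $r$. Your ``if'' direction is correct, and your reduction of bound equality to ``$\cJ^\prime (M^{uk})^{-1} r = 0$ a.s.'' is valid (and sharper than anything the paper writes down). Two remarks on the parts you treat briefly: the ``weighted/unweighted norm equivalence'' step identifying $r=0$ with Condition \ref{cond:score-is-flexible} is right in spirit but needs the identity $\bI\{0 \in \cS\} - p_\cS^\ast(X) = -\sum_{j=1}^J (\bI\{j \in \cS\} - p_\cS^\ast(X))\, p_j^\ast(X)/p_0^\ast(X)$ to handle the $j=0$ coordinate, since your two norms live on tuples of different lengths ($J+1$ versus $J$); this is exactly what the paper's explicit two-sided estimates in its $(1 \Leftrightarrow 2)$ step do.

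The genuine gap is the one you flag yourself: in the over-identified case $d_\beta < d_m$ you never prove that $\cJ^\prime (M^{uk})^{-1} r = 0$ a.s.\ forces $r = 0$. Your last paragraph is a plan (``tracking these relations \dots should show''), not an argument, and nothing you write rules out the cancellation configuration you yourself describe; so the proposal is incomplete precisely at the implication $V^{p,\infty} = V^{uk} \Rightarrow$ Condition \ref{cond:score-is-flexible}. It is worth knowing how the paper handles this: it does not, really. The paper's proof of its direction $(3) \Rightarrow (2)$ silently reads ``$V^{p,n} \nearrow V^{uk}$'' as convergence of the inner matrices $\bE[F^{p,n}(W)^{\otimes 2}] \to \bE[F^{uk}(W)^{\otimes 2}]$ --- its displayed identity $\norm{V^{uk} - V^{p,n}}_F = \norm{\bE[h^{uk}(W)^{\otimes 2} - h^{p,n}(W)^{\otimes 2}]}_F$ only makes sense under that reading --- and the passage from equality of the sandwiched bounds $(\cJ^\prime M^{-1} \cJ)^{-1}$ to equality of the inner matrices $M$ is exactly the ``$\Leftarrow$'' direction of \eqref{eq:derivative-not-matter}, which is false for general positive definite matrices when $d_\beta < d_m$: take $W_1 = \mathrm{diag}(1,1) \leq W_2 = \mathrm{diag}(1,2)$ and $\cJ = (1,0)^\prime$, so that $(\cJ^\prime W_1^{-1} \cJ)^{-1} = (\cJ^\prime W_2^{-1} \cJ)^{-1} = 1$ while $W_1 \neq W_2$. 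So your more careful bookkeeping has exposed a step the paper elides rather than resolves; but as a proof of the stated theorem, your attempt (like the paper's) is complete only in the just-identified case $d_m = d_\beta$, and the over-identified case remains open in your write-up.
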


Theorem \ref{thm:many-param-limit} tells us that knowledge of a parametric restriction on the propensity score that violates Condition \ref{cond:score-is-flexible} is beneficial in terms of efficiency because it improves the efficiency bound even asymptotically, i.e., $V^{p, \infty} < V^{uk}.$
It also implies that as long as Condition \ref{cond:score-is-flexible} is satisfied, it may hold that $V^{p, \infty} = V^{uk},$ even though $\cM_\infty^p \subsetneq \cM^{uk}.$
In other words, restricting the space of propensity scores does not necessarily improve estimation efficiency.

\subsection{Application to Logistic Propensity Scores}
In this subsections, we give two examples that satisfy or violate Condition \ref{cond:score-is-flexible} for logistic propensity scores.

\medskip

\noindent
\textit{Full-rank logistic propensity scores.}
Consider a parameter
\begin{align*}
    \Gamma 
    = 
    \begin{pmatrix}
        \Gamma_1 & \Gamma_2 & \dots
    \end{pmatrix}
    \in 
    (\bR^J)^\bN
     \ \text{where} \ 
    \Gamma_n 
    =
    (\Gamma_{n, 1}, \dots, \Gamma_{n, J})^\prime
    \in \bR^J
    .
\end{align*}
Let $(b_n)_{n \in \bN} \in (L^2 (F_X))^\bN$ be a dictionary of functions of covariate $x.$
Let 
\begin{align*}
    \gamma 
    = 
    \mathrm{vec}(\Gamma)
    =
    \begin{pmatrix}
        \Gamma_1 \\
        \Gamma_2 \\
        \vdots
    \end{pmatrix}
    .
\end{align*}
The object of interest here is a full-rank logistic propensity score that is defined as
\begin{align*}
    p_j (x; \gamma)
    =
    \frac{
        \exp \left(
            \sum_{k = 1}^\infty \Gamma_{k, j} b_k (x)
        \right)
    }{
        1 
        + 
        \sum_{i = 1}^J 
        \exp \left(
            \sum_{k = 1}^\infty \Gamma_{k, i} b_k (x)
        \right)
    }
\end{align*}
for $1 \leq j \leq J,$ and $p_0 (x; \gamma) = 1 - \sum_{j = 1}^J p_j (x; \gamma).$
In this example, the base model is $\cM_\infty^p = \{(p_j (\cdot; \gamma))_{j \in \cT} \mid \gamma \in \bR^\bN\}.$
By restricting the space of $\gamma,$ we can obtain a nested sequence of finite-dimensional parametric models as follows:
\begin{align*}
    \cM_n^p
    &=
    \left\{
        (p_j (\cdot; \gamma))_{j \in \cT}
        \in
        \cM_\infty^p
        \mid
        k \geq n + 1
        \Rightarrow
        \Gamma_k = 0
    \right\}
    \\
    &=
    \left\{
        (p_j (\cdot; \gamma))_{j \in \cT}
        \in
        \cM_\infty^p
        \ \bigg | \ 
        \gamma 
        =
        \mathrm{vec}(\Gamma)
        \ \text{where} \ 
        \Gamma
        =
        \begin{pmatrix}
            \Gamma_1 & \dots & \Gamma_n & 0 & \dots
        \end{pmatrix}
    \right\}
    .
\end{align*}
Note that $\cM_n^p$ is $d (n)$-dimensional where $d (n) = nJ.$
Suppose that the true propensity score is of order $N^\ast,$ i.e., $(p_j^\ast)_{j \in \cT} \in M_{N^\ast}^p$ with the true parameter $\Gamma^\ast = \begin{pmatrix} \Gamma_1^\ast & \dots & \Gamma_{N^\ast}^\ast & 0 & \dots \end{pmatrix}$ and the corresponding $\gamma^\ast.$
The following result provides a necessary and sufficient condition for the bound for this parametric model to approach the nonparametric bound.
\begin{proposition} \label{prop:full-logit}
For a full-rank logistic propensity score, $V^{p, \infty} = V^{uk}$ if and only if the linear span of $(b_n)_{n \in \bN}$ contains 
\begin{align*}
    \frac{\bI\{i \in \cS\} - p_{\cS}^{\ast} (X)}{1 - p_i^\ast (X)}
    e_{j, \ell}^\ast (X; \beta_j^\ast)    
\end{align*}
for all $i, j \in \cT$ and $\ell \in \{1, \dots, d_m\}.$
\end{proposition}
An implication of this proposition is that if the true propensity score is indeed full-rank logistic with $(b_n)_{n \in \bN}$ that spans $L^2 (F_X),$ knowing this does not improve the efficiency bound when the parametric model has many parameters.
In other words, such a parametric restriction provides no efficiency gain asymptotically.
However, the condition of the statement does not require $b_n$ to span $L^2 (F_X),$ in which case, the parametric model of the propensity score is restrictive compared to the nonparametric one.
Proposition \ref{prop:full-logit} claims that knowing unignorable aspects of the propensity score does not necessarily improve the efficiency.

Note that Proposition \ref{prop:full-logit} is a similar result to Proposition \ref{prop:limit-stratified-propsnsoty-score-fine}.
Both propositions tell us that if the propensity score is known to lie in a large parametric model, estimating parameters of interest in the parametric model is as hard as doing so nonparametrically.

\medskip

\noindent
\textit{Degenerate logistic propensity scores.}
As an example that violates Condition \ref{cond:score-is-flexible}, we define a degenerate logistic propensity score as
\begin{align*}
    p_j (x; \tilde \gamma)
    =
    \frac{
        \exp(\sum_{k = 1}^\infty \tilde \gamma_k b_k (x))
    }{1 + \sum_{i = 1}^J \exp(\sum_{k = 1}^\infty \tilde \gamma_k b_k (x))}
    =
    \frac{
        \exp(\sum_{k = 1}^\infty \tilde \gamma_k b_k (x))
    }{1 + J \exp(\sum_{k = 1}^\infty \tilde \gamma_k b_k (x))}
\end{align*}
for $j = 1, \dots, J$ where $\tilde \gamma \in \bR^\bN.$
This model is obtained by restricting the parameter space of the full-rank logistic propensity score by setting $\Gamma_{n, j} = \tilde \gamma_n.$
Under this model, the propensity scores for $j = 1, \dots, J$ are all identical.
\begin{proposition} \label{prop:deg-logit}
For a degenerated logistic propensity score, $V^{p, \infty} < V^{uk}$ if (i) $\cS \cap \{1, \dots, J\} \neq \emptyset$ and (ii) $\{1, \dots, J\} \setminus \cS \neq \emptyset.$
\end{proposition}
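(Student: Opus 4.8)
The plan is to deduce the strict inequality from Theorem \ref{thm:many-param-limit}: since $V^{p,\infty}\le V^{uk}$ always holds, it suffices to show that Condition \ref{cond:score-is-flexible} \emph{fails} for the degenerate logistic model, which by the theorem gives $V^{p,\infty}\neq V^{uk}$ and hence $V^{p,\infty}<V^{uk}$. The first step is to compute the scores and expose the degeneracy. Writing $g(x)=\sum_{k}\gamma_k b_k(x)$, for $j\in\{1,\dots,J\}$ we have $\log p_j(x;\gamma)=g(x)-\log(1+Je^{g(x)})$, so differentiating in $\gamma_i$ and evaluating at $\gamma^\ast$ yields $\partial_{\gamma_i}\log p_j(x;\gamma^\ast)=b_i(x)\bigl(1-Jp_1^\ast(x)\bigr)=b_i(x)\,p_0^\ast(x)$, which is \emph{independent of} $j$. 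Thus every column $S_j^{\ast n}$ of $\bS_n$ equals the same vector $(b_1p_0^\ast,\dots,b_{D_n}p_0^\ast)^\prime$, and equivalently every row of $\bS_n$ is a scalar multiple of the all-ones row vector $\mathbf{1}_J^\prime=(1,\dots,1)$.

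It follows that any linear combination $c_n^\prime\bS_n$ has the form $g\,\mathbf{1}_J^\prime$ with $g=p_0^\ast\sum_i c_{n,i}b_i\in L^2(F_X)$, so $c_n^\prime\bS_n$ lies in the ``diagonal'' subspace $\mathcal{D}\coloneqq\{g\,\mathbf{1}_J^\prime : g\in L^2(F_X)\}\subset (L^2(F_X))^J$ for every $n$ and every $c_n$. Hence, to refute Condition \ref{cond:score-is-flexible} it is enough to produce a single pair $(t,k)$ for which $\bar\ve_{tk}$ has strictly positive distance to $\mathcal{D}$. Because $((e_0^\ast)^\prime,\dots,(e_J^\ast)^\prime)^\prime\not\equiv 0$ by the standing assumption of this section, I would fix $(t,k)$ with $e_{tk}^\ast\not\equiv 0$. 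Under the hypothesis there exist $j_1\in\cS\cap\{1,\dots,J\}$ and $j_2\in\{1,\dots,J\}\setminus\cS$; the corresponding entries of $\bar\ve_{tk}$ are $(1-p_\cS^\ast)e_{tk}^\ast$ and $-p_\cS^\ast e_{tk}^\ast$, whose difference is exactly $e_{tk}^\ast$.

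To finish, I would project pointwise onto $\mathbf{1}_J$ and use the elementary variance bound $\sum_j(v_j-\bar v)^2\ge \tfrac{1}{J}(v_{j_1}-v_{j_2})^2$ (with $\bar v=J^{-1}\sum_j v_j$), which gives, for every $g\in L^2(F_X)$,
\[
    \norm{\bar\ve_{tk}-g\,\mathbf{1}_J^\prime}_{(L^2(F_X))^J}^2
    \ge
    \frac{1}{J}\norm{e_{tk}^\ast}_{L^2(F_X)}^2
    >0.
\]
Since every admissible approximant $c_n^\prime\bS_n$ belongs to $\mathcal{D}$, the approximation error in Condition \ref{cond:score-is-flexible} is bounded below by this positive constant uniformly in $n$ and $c_n$, so the condition is violated; Theorem \ref{thm:many-param-limit} then delivers $V^{p,\infty}<V^{uk}$. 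The substantive step is the observation that forcing the propensity scores to be identical across $j=1,\dots,J$ collapses every score row into the one-dimensional direction $\mathbf{1}_J$, so the whole limiting tangent contribution is confined to $\mathcal{D}$; the only mildly delicate point is converting ``$\bar\ve_{tk}\notin\mathcal{D}$'' into a \emph{uniform-in-$n$} lower bound, and this is exactly what the pointwise projection inequality above supplies, while the remaining score calculation is routine.
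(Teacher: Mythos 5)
Your proposal is correct and follows essentially the same route as the paper's proof: compute the scores to find that every admissible approximant $c_n^\prime \bS_n$ has all $J$ components equal to one common function, observe that under the hypothesis $\bar \ve_{tk}$ has two components differing by $e_{tk}^\ast \not\equiv 0$, and conclude that Condition \ref{cond:score-is-flexible} fails, whence Theorem \ref{thm:many-param-limit} gives $V^{p,\infty} < V^{uk}$. The only difference is that your pointwise variance bound makes quantitative and uniform-in-$n$ the step the paper states informally as ``it cannot approximate $\bar \ve_{tk}$ unless it contains only one function,'' which is a welcome tightening rather than a different argument.
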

To understand the assumption of the proposition, recall that $\emptyset \subsetneq \cS \subsetneq \cT,$ and assume $0 \in \cS$ for simplicity.
Then, condition (ii) is automatically satisfied; otherwise $\cS = \cT.$
Condition (i) requires $\cS$ to have at least one element other than $0.$
In the case of $\cT = \{0, 1, 2\},$ this is satisfied, for example, when we are interested in treatment effects on those who are assigned to $\cS = \{0, 1\}.$
Notice that condition (i) is never satisfied for binary treatments, $\cT = \{0, 1\}$ unless $\cS = \cT.$
Indeed, $V^{p, \infty} = V^{uk}$ holds for such cases, if the dictionary $(b_n)_{n \in \bN}$ spans $L^2 (F_X).$
See the Supplementary Material for further details.

Proposition \ref{prop:deg-logit} implies that under the obviously verifiable conditions, the space of degenerated logistic propensity scores is restrictive, and consequently, assuming this structure improves the estimation efficiency even asymptotically.

\appendix

\section{Proofs} \label{app:proof}

\subsection{Proof of Theorem \ref{thm:eff-bound}} \label{sec:proof-speb}
\begin{proof}
We trace the approach given by \cite{bickel1993efficient} and \cite{hahn1998role}.
We consider a semiparametric model $\cP = \{q_{\{f_j\}_{j \in \cT}, \gamma, f}\}$ such that
\begin{align*}
    q_{\{f_j\}_{j \in \cT}, \gamma, f}(y, t, x)
    =
    \left\{
        \prod_{j \in \cT}
        \left(
            f_j (y \mid x) p_j (x; \gamma)
        \right)^{\bI\{t = j\}}
    \right\}
    f(x)
\end{align*}
where $f_j(\cdot \mid x)$ is a density on $\bR$ for each $x \in \mathcal{X},$ and $f$ is a density on $\mathcal{X}.$
Consider a regular parametric submodel
\begin{align*}
    q_\theta(y, t, x)
    =
    \left\{
        \prod_{j \in \cT}
        \left(
            f_{j, \theta} (y \mid x) p_j (x; \gamma_\theta)
        \right)^{\bI\{t = j\}}
    \right\}
    f_\theta(x)
\end{align*}
with the true value $\theta = \theta^\ast$ and the corresponding score is
\begin{align*}
    s_\theta(y, t, x)
    =
    \sum_{j \in \cT}
    d_j s_{j, \theta}(y \mid x)
    +
    \sum_{j \in \cT}
    d_j \frac{\dot p_{j, \theta} (x)}{p_j^\ast (x)}
    +
    t_\theta(x)
\end{align*}
where
\begin{align*}
    s_{j, \theta}(y \mid x)
    &\coloneqq
    \frac{\partial}{\partial \theta} \at{\theta = \theta^\ast} \log f_{j, \theta}(y \mid x)
    ,
    \\
    \dot p_{j, \theta} (x)
    &\coloneqq
    \frac{\partial}{\partial \theta} \at{\theta = \theta^\ast} p_j (x; \gamma_\theta)
    =
    \left(\frac{\partial \gamma_\theta}{\partial \theta} \at{\theta = \theta^\ast} \right)^\prime
    S_j^\ast (x)
    p_j^\ast (x)
    ,
    \\
    t_\theta(x)
    &\coloneqq
    \frac{\partial}{\partial \theta} \at{\theta = \theta^\ast} \log f_{\theta}(x)
    .
\end{align*}
Thus, the tangent space of $\cP$ at the true parameter is $\dot \cP = \cH_1 \oplus \cH_2 \oplus \cH_3$ where
\begin{align*}
    \cH_1
    &\coloneqq
    \left\{
        w
        \mapsto
        \sum_{j \in \cT}
        d_j s_j (y \mid x)
        \ \Big | \ 
        s_j \in L^2_0 (F_{Y_j \mid X})
    \right\}
    ,
    \\
    \cH_2
    &\coloneqq
    \left\{
        w
        \mapsto
        c^\prime
        \sum_{j \in \cT}
        d_j
        S_j^\ast (x)
        \ \Big | \ 
        c \in \bR^{d_\gamma}
    \right\}
    ,
    \\
    \cH_3
    &\coloneqq
    L^2_0 (F_X)
    ,
\end{align*}
where $w = (y, (d_j)_{j \in \cT}, x),$ $F_{Y_j \mid X}$ is the distribution of $Y_j$ conditional on $X,$ and $L_0^2$ denotes the space of square integrable functions with mean zero.

The moment condition implies that for any $d_\beta \times d_m$ matrix $A,$ it holds
\begin{align*}
    A
    \bE \left[
        \begin{pmatrix}
            m (Y_0; \beta_0) \\
            \vdots \\
            m (Y_J; \beta_J)
        \end{pmatrix}
        \ \bigg | \
        T \in \cS
    \right]
    =
    0
    .
\end{align*}
On the parametric submodel, it holds
\begin{align*}
    A
    \bE_\theta \left[
        \begin{pmatrix}
            m (Y_0; \beta_0 (\theta)) \\
            \vdots \\
            m (Y_J; \beta_J (\theta))
        \end{pmatrix}
        \ \bigg | \
        T \in \cS
    \right]
    =
    0
    .
\end{align*}
Differentiating both sides in $\theta,$ we obtain
\begin{align*}
    \frac{\partial}{\partial \theta} \at{\theta = \theta^\ast} \beta (\theta)
    =
    -
    (A \cJ)^{-1}
    A
    \left(
        \frac{\partial}{\partial \theta}
        \bE_\theta \left[
            \begin{pmatrix}
                m (Y_0; \beta_0^\ast) \\
                \vdots \\
                m (Y_J; \beta_J^\ast)
            \end{pmatrix}
            \ \bigg | \
            T \in \cS
        \right]
    \right)
    .
\end{align*}
For any regular parametric submodel with score $s_\theta \in \dot \cP$, it holds that
\begin{align*}
    \frac{\partial}{\partial \theta} \at{\theta = \theta^\ast} \bE_\theta [m (Y_j; \beta_j^\ast) \mid T \in \cS]
    =
    \bE [F_j^p (W) s_\theta (W)]
\end{align*}
for each $j.$
Thus, we have
\begin{align*}
    \frac{\partial}{\partial \theta} \at{\theta = \theta^\ast} \beta (\theta)
    =
    -
    (A \cJ)^{-1}
    A
    \bE \left[
        F^p (W)
        s_\theta (W)
    \right]
    .
\end{align*}
Substituting $A = \cJ^\prime \bE \left[F^p (W)^{\otimes 2}\right]^{-1}$ gives the efficient influence function 
\begin{align*}
    \psi^p (W) 
    = 
    - 
    \left(
    \cJ^\prime 
        \bE \left[
            F^p (W)^{\otimes 2}
        \right]^{-1} 
        \cJ
    \right)^{-1} 
    \cJ^\prime 
    \bE \left[
        F^p (W)^{\otimes 2}
    \right]^{-1} 
    F^p (W)
    .
\end{align*}
The efficiency bound is, therefore, $V^p = \left(\cJ^\prime \bE \left[F^p (W)^{\otimes 2}\right]^{-1} \cJ\right)^{-1}.$
\end{proof}

\subsection{Proof of Proposition \ref{prop:projection-of-IF}}
\begin{proof}
The solution coincides with that of
\begin{align*}
    \min_{c_j \in \bR^{d_m \times d_\gamma}}
    \bE \left[
        \norm{
            \frac{1}{p_{\cS}^{\ast}}
            (D_{\cS} - p_{\cS}^{\ast} (X))
            e_j^\ast (X; \beta_j^\ast)
            -
            c_j
            \sum_{i \in \cT} D_i S_i^\ast (X)
        }^2
    \right]
    .
\end{align*}
By the standard formula of least squares, the objective is minimized at
\begin{align*}
    c_j
    &=
    \bE \left[
        \frac{1}{p_{\cS}^{\ast}}
        (D_{\cS} - p_{\cS}^{\ast} (X))
        e_j^\ast (X; \beta_j^\ast)
        \sum_{i \in \cT} D_i S_i^\ast (X)^\prime
    \right]
    \bE \left[
        \left(
            \sum_{i \in \cT} D_i S_i^\ast (X)
        \right)^{\otimes}
    \right]^{-1}
    \\
    &=
    \frac{1}{p_{\cS}^{\ast}}
    \bE \left[
        e_j^\ast (X; \beta_j^\ast)
        \sum_{i \in \cS} D_i S_i^\ast (X)^\prime
    \right]
    \bE \left[
        \left(
            \sum_{i \in \cT} D_i S_i^\ast (X)
        \right)^{\otimes}
    \right]^{-1}
    \\
    &=
    c_j (\beta_j^\ast, e_j^\ast, \gamma^\ast)
    ,
\end{align*}
where the second equality follows since
\begin{align*}
    \bE \left[
        p_{\cS}^{\ast} (X) 
        e_j^\ast (X; \beta_j^\ast) 
        \sum_{i \in \cT} 
        D_i 
        S_i^\ast (X)
    \right] 
    = 
    \bE \left[
        p_{\cS}^{\ast} (X) 
        e_j^\ast (X; \beta_j^\ast) 
        \frac{\partial}{\partial \gamma}
        \at{\gamma = \gamma^\ast}
        \left(
            \sum_{i \in \cT} p_i (X; \gamma)
        \right)
    \right] 
    = 
    0
\end{align*}
holds.
\end{proof}

\subsection{Proof of Theorem \ref{thm:formula-efficiency-gain}} \label{app:proof-formula-efficiency-gain}

We first derive the efficient influence function by applying Theorem \ref{thm:eff-bound}.
\begin{theorem} \label{thm:speb-stratified}
For the $K$-stratified propensity score with partition $\bigsqcup_{k = 1}^K \cX_k,$ the efficient influence function is given by that in Theorem \ref{thm:eff-bound} with
\begin{align*}
    F_j^p (W)
    =
    &
    \frac{1}{p_{\cS}^{\ast}}
    \sum_{k = 1}^K
    \bI\{X \in \cX_k\}
    \Bigg\{
        \frac{p_{\cS, k}^{\ast}}{p_{j, k}^{\ast}}
        D_j
        (m (Y; \beta_j^\ast) - e_j^\ast (X; \beta_j^\ast))
        \\
        +
        &
        (D_{\cS} - p_{\cS, k}^{\ast})
        \bE \left[
            e_j^\ast (X; \beta_j^\ast) \mid X \in \cX_k
        \right]
        +
        p_{\cS, k}^{\ast}
        e_j^\ast (X; \beta_j^\ast)
    \Bigg\}
    ,
\end{align*}
where $p_{\cS, k}^{\ast} = \sum_{i \in \cS} p_{i, k}^{\ast}$ and $p_{\cS}^{\ast} = \sum_{k = 1}^K p_{\cS, k}^{\ast} \bP(X \in \cX_k).$
\end{theorem}

\begin{proof}[of Theorem \ref{thm:speb-stratified}]
Notice that
\begin{align*}
    p_i^\ast(x) S_i^\ast(x)
    =
    e_i^J 
    \otimes
    \begin{pmatrix}
        \bI\{x \in \cX_1\} \\
        \vdots \\
        \bI\{x \in \cX_K\}
    \end{pmatrix}
\end{align*}
for $1 \leq i \leq J$ and 
\begin{align*}
    p_0^\ast(x) S_0^\ast(x)
    =
    - 
    \iota^J 
    \otimes
    \begin{pmatrix}
        \bI\{x \in \cX_1\} \\
        \vdots \\
        \bI\{x \in \cX_K\}
    \end{pmatrix}
\end{align*}
where $e_i^J$ is the $J$-dimensional $i$th unit vector and $\iota^J$ is the $J$-dimensional vector of ones.
Then, we have
\begin{align*}
    \sum_{i \in \cT}
    D_i
    S_i^\ast (X)
    =
    \sum_{i = 1}^J 
    \sum_{k = 1}^K
    \left(\frac{D_i}{p_{i, k}^\ast} - \frac{D_0}{p_{0, k}^\ast}\right)
    \bI\{X \in \cX_k\}
    (e_i^J \otimes e_k^K)
    ,
\end{align*}
and therefore, 
\begin{align*}
    \bE \left[
        e_j^\ast (X; \beta_j^\ast)
        \sum_{i \in \cS}
        D_i
        S_i^\ast (X)^\prime
    \right]
    =
    -
    \sum_{i \in \cS^\complement}
    \sum_{k = 1}^K
    \bE [e_j^\ast (X; \beta_j^\ast) \mid X \in \mathcal{X}_k]
    \bP(X \in \mathcal{X}_k)
    (e_i^J \otimes e_k^K)
    .
\end{align*}
One can also verify 
\begin{align*}
    \bE\left[
        \left(\sum_{i \in \cT} D_i S_i^\ast (X)\right)^{\otimes 2}
    \right]
    =
    \sum_{i = 1}^J 
    \sum_{k = 1}^K
    \frac{\bP(X \in \cX_k)}{p_{i, k}^\ast}
    (E_{ii}^J \otimes E_{kk}^K)
    +
    \left(
        1^J
        \otimes 
        \sum_{k = 1}^K
        \frac{\bP(X \in \cX_k)}{p_{0, k}^\ast} 
        E_{kk}^K
    \right)
\end{align*}
where $E_{ij}^J \coloneqq e_i^J (e_j^J)^\prime,$ and $1^J$ is the $J \times J$ matrix consisting of ones.
To find its inverse, we use the following lemma.
\begin{lemma} \label{lem:inverse}
Let $C = \sum_{i = 1}^J \sum_{k = 1}^K c_{ik} (E_{ii}^J \otimes E_{kk}^K)$ and $D = \sum_{k = 1}^K d_k E_{kk}^K$ for $c_{ik}, d_k \neq 0.$
Then,
\begin{align*}
    (C + 1^J \otimes D)^{-1}
    =
    \sum_{i = 1}^J 
    \sum_{k = 1}^K 
    c_{ik}^{-1} (E_{ii}^J \otimes E_{kk}^K)
    -
    \sum_{i = 1}^J 
    \sum_{j = 1}^J 
    \sum_{k = 1}^K 
    d_k c_{ik}^{-1} c_{jk}^{-1} p_k^{-1}
    (E_{ij}^J \otimes E_{kk}^K)
    ,
\end{align*}
where
\begin{align*}
    p_k
    =
    1
    +
    \sum_{i = 1}^J
    \frac{d_k}{c_{ik}}
    .
\end{align*}
\end{lemma}
Using this lemma, we have
\begin{align*}
    \bE\left[
        \left(\sum_{i \in \cT} D_i S_i^\ast (X)\right)^{\otimes 2}
    \right]^{-1}
    =
    \sum_{i = 1}^J 
    \sum_{k = 1}^K 
    \frac{p_{i, k}^\ast}{\bP (X \in \cX_k)}
    (E_{ii}^J \otimes E_{kk}^K)
    -
    \sum_{i = 1}^J 
    \sum_{j = 1}^J 
    \sum_{k = 1}^K 
    \frac{p_{i, k}^\ast p_{j, k}^\ast}{\bP (X \in \cX_k)}
    (E_{ij}^J \otimes E_{kk}^K)
    .
\end{align*}
Thus, we obtain
\begin{align*}
    c_j(\beta_j^\ast, e_j^\ast, \gamma^\ast) 
    \sum_{i \in \cT} D_i S_i^\ast (X)
    &=
    \bE \left[
        e_j^\ast (X; \beta_j^\ast)
        \sum_{i \in \cS}
        D_i
        S_i^\ast (X)^\prime
    \right]
    \bE\left[
        \left(\sum_{i \in \cT} D_i S_i^\ast (X)\right)^{\otimes 2}
    \right]^{-1}
    \sum_{i \in \cT} D_i S_i^\ast (X)
    \\
    &=
    \frac{1}{p_{\cS}^{\ast}}
    \sum_{k = 1}^K
    (D_{\cS} - p_{\cS, k}^{\ast})
    \bI\{X \in \cX_k\}
    \bE [e_j^\ast (X; \beta_j^\ast) \mid X \in \cX_k]
    ,
\end{align*}
which completes the proof.
\end{proof}

\begin{proof} [of Lemma \ref{lem:inverse}]
Since $1^J \otimes D = (\iota^J \otimes D^{1/2}) (\iota^J \otimes D^{1/2})^\prime,$ we have
\begin{align*}
    (C + 1^J \otimes D)^{-1}
    &=
    C^{-1}
    -
    C^{-1}
    (\iota^J \otimes D^{1/2})
    P^{-1}
    (\iota^J \otimes D^{1/2})^\prime
    C^{-1}
\end{align*}
by the Woodbury formula, where
\begin{align*}
    P
    =
    I
    +
    (\iota^J \otimes D^{1/2})^\prime
    C^{-1}
    (\iota^J \otimes D^{1/2})
    =
    \sum_{k = 1}^K
    p_k
    E_{kk}^K
    .
\end{align*}
Then, one can also verify
\begin{align*}
    C^{-1}
    (\iota^J \otimes D^{1/2})
    P^{-1}
    (\iota^J \otimes D^{1/2})^\prime
    C^{-1}
    =
    \sum_{i = 1}^J 
    \sum_{j = 1}^J 
    \sum_{k = 1}^K 
    d_k c_{ik}^{-1} c_{jk}^{-1} p_k^{-1}
    (E_{ij}^J \otimes E_{kk}^K)
    ,
\end{align*}
which completes the proof.
\end{proof}

\begin{proof}[of Theorem \ref{thm:formula-efficiency-gain}]
By using the formulas (\ref{eq:EIF-known}) and (\ref{eq:EIF-nonparametric}), we obtain the efficient influence functions for the cases of known and unknown propensity scores as follows:
\begin{align*}
    F_j^k (W)
    &=
    \frac{1}{p_{\cS}^{\ast}}
    \sum_{k = 1}^K
    \bI\{X \in \cX_k\}
    \left\{
        \frac{p_{\cS, k}^{\ast}}{p_{j, k}^{\ast}}
        D_j
        (m (Y; \beta_j^\ast) - e_j^\ast (X; \beta_j^\ast))
        +
        p_{\cS, k}^{\ast}
        e_j^\ast (X; \beta_j^\ast)
    \right\}
    ,
    \\
    F_j^{uk} (W)
    &=
    \frac{1}{p_{\cS}^{\ast}}
    \sum_{k = 1}^K
    \bI\{X \in \cX_k\}
    \Bigg\{
        \frac{p_{\cS, k}^{\ast}}{p_{j, k}^{\ast}}
        D_j
        (m (Y; \beta_j^\ast) - e_j^\ast (X; \beta_j^\ast))
        \\
        &
        \hspace{15em}
        +
        (D_{\cS} - p_{\cS, k}^{\ast})
        e_j^\ast (X; \beta_j^\ast)
        +
        p_{\cS, k}^{\ast}
        e_j^\ast (X; \beta_j^\ast)
    \Bigg\}
    .
\end{align*}
Thus, we have
\begin{align*}
    \Delta_K^{uk \to p}
    &=
    \bE \left[F^{uk} (W)^{\otimes}\right]
    -
    \bE \left[F^p (W)^{\otimes}\right]
    \\
    &=
    \frac{1}{(p_{\cS}^{\ast})^2}
    \sum_{k = 1}^K
    p_{\cS, k}^{\ast}
    (1 - p_{\cS, k}^{\ast})
    \bP (X \in \cX_k)
    \\
    &
    \hspace{6em}
    \cdot
    \left\{
        \begin{pmatrix}
            \bE [e_0^\ast (X; \beta_0^\ast)^{\otimes 2} \mid X \in \cX_k] \\
            \vdots \\
            \bE [e_J^\ast (X; \beta_J^\ast)^{\otimes 2} \mid X \in \cX_k]
        \end{pmatrix}
        -
        \begin{pmatrix}
            \bE [e_0^\ast (X; \beta_0^\ast) \mid X \in \cX_k] \\
            \vdots \\
            \bE [e_J^\ast (X; \beta_J^\ast) \mid X \in \cX_k]
        \end{pmatrix}^{\otimes 2}
    \right\}
    \\
    &=
    \frac{1}{(p_{\cS}^{\ast})^2}
    \sum_{k = 1}^K
    p_{\cS, k}^{\ast}
    (1 - p_{\cS, k}^{\ast})
    \bP (X \in \cX_k)
    \bV \left(
        \begin{pmatrix}
            e_0^\ast (X; \beta_0^\ast) \\
            \vdots \\ 
            e_J^\ast (X; \beta_J^\ast)
        \end{pmatrix} 
        \ \bigg | \
        X \in \cX_k
    \right)
    ,
\end{align*}
and similarly,
\begin{align*}
    \Delta_K^{p \to k}
    &=
    \bE \left[F^p (W)^{\otimes}\right]
    -
    \bE \left[F^k (W)^{\otimes}\right]
    \\
    &=
    \frac{1}{(p_{\cS}^{\ast})^2}
    \sum_{k = 1}^K
    p_{\cS, k}^{\ast}
    (1 - p_{\cS, k}^{\ast})
    \bP (X \in \cX_k)
    \left\{
        \bE \left[
            \begin{pmatrix}
                e_0^\ast (X; \beta_0^\ast) \\
                \vdots \\ 
                e_J^\ast (X; \beta_J^\ast)
            \end{pmatrix} 
            \ \bigg | \
            X \in \cX_k
        \right]
    \right\}^{\otimes 2}
\end{align*}
holds.
\end{proof}

\begin{remark}
As we show in Proposition \ref{prop:projection-of-IF}, $F^{uk} - F^p$ and $F^p$ are orthogonal.
It is easy to verify it in this specific example as follows:
\begin{align*}
    &\hphantom{=}
    \bE \left[
        (F_j^{uk} (W) - F_j^p (W))^\prime
        F_j^p (W)
    \right]
    \\
    &=
    \frac{1}{(p_{\cS}^{\ast})^2}
    \bE \Bigg[
        \sum_{k = 1}^K 
        \bI \{X \in \cX_k\}
        (D_{\cS} - p_{\cS, k}^{\ast})
        \left(
            e_j^\ast (X; \beta_j^\ast)
            -
            \bE [e_j^\ast (X; \beta_j^\ast) \mid X \in \cX_k]
        \right)^\prime
        \\
        &\hspace{8cm}\cdot
        \left\{
            \sum_{k = 1}^K 
            \bI \{X \in \cX_k\}
            (D_{\cS} - p_{\cS, k}^{\ast})
            \bE [e_j^\ast (X; \beta_j^\ast) \mid X \in \cX_k]
        \right\}
    \Bigg]
    \\
    &=
    \frac{1}{(p_{\cS}^{\ast})^2}
    \sum_{k = 1}^K 
    \bE \left[
        \bI \{X \in \cX_k\}
        (D_{\cS} - p_{\cS, k}^{\ast})^2
        \left(
            e_j^\ast (X; \beta_j^\ast)
            -
            \bE [e_j^\ast (X; \beta_j^\ast) \mid X \in \cX_k]
        \right)^\prime
    \right]
    \bE [e_j^\ast (X; \beta_j^\ast) \mid X \in \cX_k]
    \\
    &=
    \frac{1}{(p_{\cS}^{\ast})^2}
    \sum_{k = 1}^K 
    p_{\cS, k}^{\ast} (1 - p_{\cS, k}^{\ast})
    \bE \left[
        \bI \{X \in \cX_k\}
        \left(
            e_j^\ast (X; \beta_j^\ast)
            -
            \bE [e_j^\ast (X; \beta_j^\ast) \mid X \in \cX_k]
        \right)^\prime
    \right]
    \bE [e_j^\ast (X; \beta_j^\ast) \mid X \in \cX_k]
    \\
    &=
    0
    .
\end{align*}
\end{remark}

\subsection{Proof of Proposition \ref{prop:limit-stratified-propsnsoty-score-fine}}
\begin{proof}
Notice that
\begin{align*}
    \Delta_{K (n)}^{uk \to p}
    =
    \left(
        \bE \left[F^{uk} (W)^{\otimes}\right]
        -
        \bE \left[F^k (W)^{\otimes}\right]
    \right)
    -
    \left(
        \bE \left[F^p (W)^{\otimes}\right]
        -
        \bE \left[F^k (W)^{\otimes}\right]
    \right)
    .
\end{align*}
For the first parenthesis, we have
\begin{align*}
    \bE \left[F^{uk} (W)^{\otimes}\right]
    -
    \bE \left[F^k (W)^{\otimes}\right]
    &=
    \frac{1}{(p_{\cS}^{\ast})^2}
    \sum_{k = 1}^{K (n)}
    p_{\cS, k}^{\ast}
    (1 - p_{\cS, k}^{\ast})
    \bP (X \in \cX_k)
    \bE \left[
        \begin{pmatrix}
            e_0^\ast (X; \beta_0^\ast) \\
            \vdots \\ 
            e_J^\ast (X; \beta_J^\ast)
        \end{pmatrix}^{\otimes 2} 
        \ \Bigg | \
        X \in \cX_k
    \right]
    \\
    &\overset{K \to \infty}{\longrightarrow}
    \frac{1}{\left(p_{\cS}^{\ast}\right)^2}
    \int_{\cX}
        p_{\cS}^{\ast} (x)
        (1 - p_{\cS}^{\ast} (x))
        \begin{pmatrix}
            e_0^\ast (x; \beta_0^\ast) \\
            \vdots \\ 
            e_J^\ast (x; \beta_J^\ast)
        \end{pmatrix}^{\otimes 2} 
    d F_X (x)
    ,
\end{align*}
and similarly, for the second parenthesis, it holds that 
\begin{align*}
    \bE \left[F^p (W)^{\otimes}\right]
    -
    \bE \left[F^k (W)^{\otimes}\right]
    &=
    \frac{1}{\left(p_{\cS}^{\ast}\right)^2}
    \sum_{k = 1}^{K (n)}
    p_{\cS, k}^{\ast}
    (1 - p_{\cS, k}^{\ast})
    \bP (X \in \cX_k)
    \bE \left[
        \begin{pmatrix}
            e_0^\ast (X; \beta_0^\ast) \\
            \vdots \\ 
            e_J^\ast (X; \beta_J^\ast) \\
        \end{pmatrix} 
        \ \Bigg | \
        X \in \cX_k
    \right]^{\otimes 2}
    \\
    &\overset{K \to \infty}{\longrightarrow}
    \frac{1}{\left(p_{\cS}^{\ast}\right)^2}
    \int_{\cX}
        p_{\cS}^{\ast} (x)
        (1 - p_{\cS}^{\ast} (x))
        \begin{pmatrix}
            e_0^\ast (x; \beta_0^\ast) \\
            \vdots \\ 
            e_J^\ast (x; \beta_J^\ast)
        \end{pmatrix}^{\otimes 2}
    d F_X (x)
    .
\end{align*}
Therefore, it holds that $\lim_{n \to \infty} \Delta_{K (n)}^{uk \to p} = 0.$
\end{proof}

\subsection{Proof of Proposition \ref{prop:limit-stratified-propsnsoty-score-coarse}} \label{app:proof-limit-stratified-propsnsoty-score-coarse}

\begin{proof}
There exists a sequence $k_n$ such that $\cX_{k_n}^{(n)}$ is nonincreasing and $\lim_{n \to \infty} \mathrm{vol} (\cX_{k_n}^{(n)}) > 0.$
Thus, there is an open ball $B$ such that $\cap_n \cX_{k_n}^{(n)} \supset B.$
It holds that
\begin{align*}
    \Delta_{K (n)}^{uk \to p}
    \geq
    \frac{1}{(p_{\cS}^{\ast})^2}
    p_{\cS, k_n}^\ast
    (1 - p_{\cS, k_n}^\ast)
    \bP (X \in \cX_{k_n})
    \bV \left(
        \begin{pmatrix}
            e_0^\ast (X; \beta_0^\ast) \\
            \vdots \\ 
            e_J^\ast (X; \beta_J^\ast)
        \end{pmatrix} 
        \ \bigg | \
        X \in B
    \right)
    >
    0
\end{align*}
because $e_j^\ast (\cdot; \beta_j^\ast)$ is not constant on $B.$
\end{proof}

\subsection{Convergence of a bounded and nondecreasing sequence of positive semidefinite matrices} \label{app:conv-psd}

We show the following statement.
\begin{lemma}
Let $A_n \in \bR^{d \times d}$ be a nondecreasing sequence of positive semidefinite matrices that is bounded above by another positive semidefinite matrix.
Then, $\lim_{n \to \infty} A_n$ exists.
\end{lemma}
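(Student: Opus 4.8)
The plan is to reduce the matrix statement to the one-dimensional monotone convergence theorem through quadratic forms, and then recover convergence of the matrix itself by polarization. Throughout, recall that the symbol $\geq$ denotes the Loewner order, as in (\ref{eq:derivative-not-matter}): the hypotheses are that each $A_n$ is positive semidefinite, that $A_{n + 1} - A_n \geq 0$, and that there is a positive semidefinite matrix $B$ with $A_n \leq B$ for all $n$.

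First I would fix an arbitrary vector $v \in \bR^d$ and study the scalar sequence $a_n (v) \coloneqq v^\prime A_n v$. From $A_{n + 1} - A_n \geq 0$ we get $a_{n + 1} (v) - a_n (v) = v^\prime (A_{n + 1} - A_n) v \geq 0$, so $(a_n (v))_{n \in \bN}$ is nondecreasing; and from $A_n \leq B$ we get $a_n (v) \leq v^\prime B v$, so it is bounded above. By the monotone convergence theorem for real sequences, $a_n (v)$ converges to a finite limit, which I denote $a (v)$, for every $v \in \bR^d$.

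Next I would convert this pointwise convergence of quadratic forms into entrywise convergence of the matrices. Since each $A_n$ is positive semidefinite it is in particular symmetric, so its entries are determined by its quadratic form via polarization. Writing $e_i$ for the standard basis vectors, the diagonal entries satisfy $(A_n)_{ii} = a_n (e_i) \to a (e_i)$, and from $a_n (e_i + e_j) = (A_n)_{ii} + 2 (A_n)_{ij} + (A_n)_{jj}$ I obtain $(A_n)_{ij} = \tfrac{1}{2} (a_n (e_i + e_j) - a_n (e_i) - a_n (e_j))$, which converges to $\tfrac{1}{2} (a (e_i + e_j) - a (e_i) - a (e_j))$ for all $i, j$. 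Defining $A$ to be the matrix whose $(i, j)$ entry is this limit exhibits $A$ as the entrywise limit of $A_n$. Because $\bR^{d \times d}$ is finite dimensional, entrywise convergence coincides with convergence in any matrix norm, so $A_n \to A$ and the limit exists. As a byproduct $A$ is symmetric and, being a limit of positive semidefinite matrices, itself positive semidefinite, although neither fact is needed for the statement.

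The argument is essentially routine, reducing everything to the scalar monotone convergence theorem applied to $v^\prime A_n v$. The one point that requires care is the passage from convergence of the scalar quadratic forms back to convergence of the matrix entries: this relies on the symmetry of the $A_n$, which is guaranteed by positive semidefiniteness, so that polarization recovers every off-diagonal entry. I expect that polarization step to be the only place where a mistake could slip in; the rest is immediate.
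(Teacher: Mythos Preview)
Your proposal is correct and follows essentially the same approach as the paper: both reduce to the scalar monotone convergence theorem applied to $v^\prime A_n v$ and then recover entrywise convergence via a polarization identity. The only cosmetic difference is that the paper uses the identity $A_{n,(i,j)} = \tfrac{1}{4}\big((e_i+e_j)^\prime A_n (e_i+e_j) - (e_i-e_j)^\prime A_n (e_i-e_j)\big)$ instead of your three-term version, but the argument is otherwise identical.
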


\begin{proof}
For each $x \in \bR^d,$ a real sequence $(x^\prime A_n x)_{n \in \bN}$ is nondecreasing and bounded, so that its limit exists.
The conclusion holds because the $(i, j)$-element of $A_n$ can be written as
\begin{align*}
    A_{n, (i, j)}
    =
    \frac{1}{4}
    \left(
        (e_i + e_j)^\prime A_n (e_i + e_j)
        -
        (e_i - e_j)^\prime A_n (e_i - e_j)
    \right)
    ,
\end{align*}
which converges.
\end{proof}

\subsection{Proof of Theorem \ref{thm:many-param-limit}} \label{app:proof-many-param-limit}

Let
\begin{align*}
    \cH_2^{p, n}
    &\coloneqq
    \left\{
        w
        \mapsto
        c_n^\prime
        \sum_{i \in \cT}
        d_i
        S_i^{\ast n} (x)
        \ \Big | \ 
        c_n \in \bR^{d (n)}
    \right\}
    , 
    \\
    \cH_2^{uk}
    &\coloneqq
    \left\{
        w
        \mapsto
        \sum_{i \in \cT}
        d_i
        \frac{\dot p_i (x)}{p_i^\ast (x)}
        \ \Big | \ 
        \dot p_i \in L^2 (F_X)
        ,
        \ 
        \sum_{i \in \cT} \dot p_i 
        =
        0
    \right\}
    .
\end{align*}
Also, let $h_j^{p, n} \coloneqq (h_{j, 1}^{p, n}, \dots, h_{j, d_m}^{p, n})^\prime$ and $h_j^{uk} \coloneqq (h_{j, 1}^{uk}, \dots, h_{j, d_m}^{nk})^\prime$ where 
\begin{align*}
    h_{j, \ell}^{p, n} (W)
    &\coloneqq
    \frac{1}{p_{\cS}^{\ast}} 
    \left(c_{j, \ell}^n(\beta_i^\ast, e_j^\ast, \gamma^\ast)\right)^\prime
    \sum_{i \in \cT} D_i S_i^{\ast n}(X)
    \in
    \cH_2^{p, n}
    \\
    h_{j, \ell}^{uk} (W)
    &\coloneqq
    \frac{1}{p_{\cS}^{\ast}}
    (D_{\cS} - p_{\cS}^{\ast} (X)) 
    e_{j, \ell}^{\ast} (X; \beta_j^\ast)
    \in 
    \cH_2^{uk}
\end{align*}
for
\begin{align*}
    c_{j, \ell}^n(\beta_j^\ast, e_j^\ast, \gamma^\ast)
    =
    \bE \left[
        \left(\sum_{i \in \cT} D_i S_i^{\ast n} (X)\right)^{\otimes 2}
    \right]^{-1}
    \bE \left[
        e_{j, \ell}^{\ast} (X; \beta_j^\ast)
        \sum_{i \in \cS}
        D_i
        S_i^{\ast n} (X)
    \right]
    .
\end{align*}
The fact that $h_{j, \ell}^{uk} \in \cH_2^{uk}$ follows by setting
\begin{align*}
    \dot p_i (x)
    =
    \begin{cases}
        \frac{1}{p_{\cS}^{\ast}}
        e_{j, \ell}^{\ast} (X; \beta_j^\ast)
        (1 - p_{\cS}^{\ast} (X))
        p_i^\ast (X)
        &
        (i \in \cS)
        \\
        -
        \frac{1}{p_{\cS}^{\ast}}
        e_{j, \ell}^{\ast} (X; \beta_j^\ast)
        p_{\cS}^{\ast} (X)
        p_i^\ast (X)
        &
        (i \notin \cS)
    \end{cases}
    .
\end{align*}
Let $h^{p, n} \coloneqq ((h_0^{p, n})^\prime, \dots, (h_J^{p, n})^\prime)^\prime \in (\cH_2^{p, n})^{(J + 1) d_m}$ and $h^{uk} \coloneqq ((h_0^{uk})^\prime, \dots, (h_J^{uk})^\prime)^\prime \in (\cH_2^{uk})^{(J + 1) d_m}.$
Notice that $h^{p, n}$ is the projection of $h^{uk}$ onto $(\cH_2^{p, n})^{(J + 1) d_m}$ by Proposition \ref{prop:projection-of-IF}.

We show the following theorem, which immediately implies Theorem \ref{thm:many-param-limit}.

\begin{theorem}
The following are equivalent:
\begin{enumerate}
    \item[(1)] Condition \ref{cond:score-is-flexible} holds;
    \item[(2)] $h^{p, n} \to h^{uk}$ in $(L^2(F_{D, X}))^{(J + 1) d_m};$
    \item[(3)] $V^{p, n} \nearrow V^{uk}$ in the sense of positive semidefinite matrix.
\end{enumerate}
\end{theorem}

\begin{proof}
$(2 \Rightarrow 1)$
Let $c_{j, \ell}^{\ast n} \coloneqq c_{j, \ell}^n(\beta_j^\ast, e_j^\ast, \gamma^\ast)$ for $j \in \cT$ and $\ell \in \{1, \dots, d_m\}.$
Then, it holds that
\begin{align*}
    \bE \left[
        \left|
            h_{j, \ell}^{p, n} (W)
            -
            h_{j, \ell}^{uk} (W)
        \right|^2
    \right]
    &=
    \frac{1}{(p_{\cS}^{\ast})^2}
    \bE \left[
        \sum_{i \in \cT}
        p_i^\ast (X)
        \left\{
            (c_{j, \ell}^{\ast n})^\prime
            S_i^{\ast n} (X)
            -
            (\bI\{i \in \cS\} - p_{\cS}^{\ast} (X))
            e_{j, \ell}^{\ast} (X; \beta_j^\ast)
        \right\}^2
    \right]
    \\
    &\geq
    \frac{p_{\text{min}}}{(p_{\cS}^{\ast})^2}
    \sum_{i \in \cT}
    \bE \left[
        \left\{
            (c_{j, \ell}^{\ast n})^\prime
            S_i^{\ast n} (X)
            -
            (\bI\{i \in \cS\} - p_{\cS}^{\ast} (X))
            e_{j, \ell}^{\ast} (X; \beta_j^\ast)
        \right\}^2
    \right]
    .
\end{align*}
Since the LHS converges to zero as $n \to \infty$ by (2), we have
\begin{align*}
    \inf_{c_n \in \bR^{d (n)}}
    \norm{
        c_n^\prime \bS_n
        -
        \bar \ve_{j, \ell}
    }_{(L^2 (F_X))^J}
    \leq
    \sum_{i = 1}^J
    \norm{(c_{j, \ell}^{\ast n})^\prime S_i^{\ast n} - (\bI\{i \in \cS\} - p_{\cS}^{\ast} (\cdot)) e_{j, \ell}^{\ast} (\cdot; \beta_j^\ast)}_{L^2 (F_X)}
    \to
    0
    .
\end{align*}

\noindent
$(1 \Rightarrow 2)$
Let $c_n \in \bR^{d (n)}.$
Since $h_{j, \ell}^{p, n}$ is the projection of $h_{j, \ell}^{uk}$ onto $\cH_2^{p, n},$ it holds
\begin{align*}
    \bE \left[
        \left|
            h_{j, \ell}^{p, n} (W)
            -
            h_{j, \ell}^{uk} (W)
        \right|^2
    \right]
    &\leq
    \frac{1}{(p_{\cS}^{\ast})^2}
    \bE \left[
        \left\{
            c_n^\prime 
            \sum_{i \in \cT} D_i S_i^{\ast n} (X)
            -
            (D_{\cS} - p_{\cS}^{\ast} (X)) e_{j, \ell}^{\ast} (X; \beta_j^\ast)
        \right\}^2
    \right]
    \\
    &=
    \frac{1}{(p_{\cS}^{\ast})^2}
    \bE \left[
        \left|
            \sum_{i \in \cT} 
            D_i 
            \left\{
                c_n^\prime 
                S_i^{\ast n} (X)
                -
                (\bI\{i \in \cS\} - p_{\cS}^{\ast} (X)) e_{j, \ell}^{\ast} (X; \beta_j^\ast)
            \right\}
        \right|^2
    \right]
    \\
    &\leq
    \frac{2 J}{(p_{\cS}^{\ast})^2}
    \bE \left[
        \sum_{i = 1}^J
        \left\{
            c_n^\prime
            S_i^{\ast n} (X)
            -
            (\bI\{i \in \cS\} - p_{\cS}^{\ast} (X))
            e_{j, \ell}^{\ast} (X; \beta_j^\ast)
        \right\}^2
    \right]
    \\
    &
    \hspace{1em}
    +
    \frac{2}{(p_{\cS}^{\ast})^2}
    \bE \left[
        \left\{
            -
            c_n^\prime
            \sum_{i = 1}^J
            S_i^{\ast n} (X)
            \frac{p_i^\ast (X)}{p_0^\ast (X)}
            -
            (\bI\{0 \in \cS\} - p_{\cS}^{\ast} (X))
            e_{j, \ell}^{\ast} (X; \beta_j^\ast)
        \right\}^2
    \right]
    .
\end{align*}
For the second term, since
\begin{align*}
    \bI\{0 \in \cS\} 
    - 
    p_{\cS}^{\ast} (X)
    =
    -
    \sum_{i = 1}^J
    (\bI\{i \in \cS\} - p_{\cS}^{\ast} (X))
    \frac{p_i^\ast (X)}{p_0^\ast (X)}
    ,
\end{align*}
it holds that
\begin{align*}
    &\hphantom{=}
    \frac{2}{(p_{\cS}^{\ast})^2}
    \bE \left[
        \left\{
            -
            c_n^\prime
            \sum_{i = 1}^J
            S_i^{\ast n} (X)
            \frac{p_i^\ast (X)}{p_0^\ast (X)}
            -
            (\bI\{0 \in \cS\} - p_{\cS}^{\ast} (X))
            e_{j, \ell}^{\ast} (X; \beta_j^\ast)
        \right\}^2
    \right]
    \\
    &=
    \frac{2}{(p_{\cS}^{\ast})^2}
    \bE \left[
        \left|
            \sum_{i = 1}^J
            \left\{
                c_n^\prime
                S_i^{\ast n} (X)
                -
                (\bI\{i \in \cS\} - p_{\cS}^{\ast} (X))
                e_{j, \ell}^{\ast} (X; \beta_j^\ast)
            \right\}
            \frac{p_i^\ast (X)}{p_0^\ast (X)}
        \right|^2
    \right]
    \\
    &\leq
    \frac{2 J}{(p_{\cS}^{\ast})^2 p_{\text{min}}^2}
    \sum_{i = 1}^J
    \bE \left[
        \left\{
            c_n^\prime
            S_i^{\ast n} (X)
            -
            (\bI\{i \in \cS\} - p_{\cS}^{\ast} (X))
            e_{j, \ell}^{\ast} (X; \beta_j^\ast)
        \right\}^2
    \right]
    .
\end{align*}
Therefore, we have
\begin{align*}
    \bE \left[
        \left|
            h_{j, \ell}^{p, n} (W)
            -
            h_{j, \ell}^{uk} (W)
        \right|^2
    \right]
    \leq
    \frac{2 J (p_{\text{min}}^2 + 1)}{(p_{\cS}^{\ast})^2 p_{\text{min}}^2}
    \sum_{i = 1}^J
    \bE \left[
        \left\{
            c_n^\prime
            S_i^{\ast n} (X)
            -
            (\bI\{i \in \cS\} - p_{\cS}^{\ast} (X))
            e_{j, \ell}^{\ast} (X; \beta_j^\ast)
        \right\}^2
    \right]
    .
\end{align*}
By taking the infimum over $c_n \in \bR^{d (n)}$ in the RHS, Condition \ref{cond:score-is-flexible} implies that
\begin{align*}
    \bE \left[
        \left|
            h_{j, \ell}^{p, n} (W)
            -
            h_{j, \ell}^{uk} (W)
        \right|^2
    \right]
    \to
    0
    .
\end{align*}

\noindent
$(2 \Rightarrow 3).$
It is sufficient to show $\norm{\bE \left[h^{uk} (W)^{\otimes 2} \right] - \bE \left[h^{p, n} (W)^{\otimes 2} \right]}_{F} \to 0,$ where $\norm{\cdot}_F$ denotes the Frobenius norm.
It holds that 
\begin{align*}
    \norm{
        \bE \left[
            h^{uk} (W)^{\otimes 2} 
        \right]
        -
        \bE \left[
            h^{p, n} (W)^{\otimes 2} 
        \right]
    }_{F}^2
    &=
    \sum_{i, j, \ell, k}
    \left(
        \bE \left[
            h_{i, \ell}^{uk} (W)
            h_{j, k}^{uk} (W)
            -
            h_{i, \ell}^{p, n} (W)
            h_{j, k}^{p, n} (W)
        \right]
    \right)^2
    \\
    &\leq
    \sum_{i, j, \ell, k}
    \norm{
        h_{i, \ell}^{uk}
        h_{j, k}^{uk}
        -
        h_{i, \ell}^{p, n}
        h_{j, k}^{p, n}
    }_{L^1 (F_{D, X})}^2
    .
\end{align*}
For each $i, j, \ell, k,$ we have by (2),
\begin{align*}
    \norm{
        h_{i, \ell}^{uk}
        h_{j, k}^{uk}
        -
        h_{i, \ell}^{p, n}
        h_{j, k}^{p, n}
    }_{L^1 (F_{D, X})}
    &\leq
    \norm{
        h_{i, \ell}^{uk}
        \left(
            h_{j, k}^{uk}
            -
            h_{j, k}^{p, n}
        \right)
    }_{L^1 (F_{D, X})}
    +
    \norm{
        \left(
            h_{i, \ell}^{uk}
            -
            h_{i, \ell}^{p, n}
        \right)
        h_{j, k}^{p, n}
    }_{L^1 (F_{D, X})}
    \\
    &\leq
    \norm{
        h_{i, \ell}^{uk}
    }_{L^2 (F_{D, X})}
    \norm{
        h_{j, k}^{uk}
        -
        h_{j, k}^{p, n}
    }_{L^2 (F_{D, X})}
    +
    \norm{
        h_{j, k}^{p, n}
    }_{L^2 (F_{D, X})}
    \norm{
        h_{i, \ell}^{uk}
        -
        h_{i, \ell}^{p, n}
    }_{L^2 (F_{D, X})}
    \\
    &\to
    0
    ,
\end{align*}
which implies (3).

\noindent
$(3 \Rightarrow 2).$
Since $h^{p, n}$ is the projection of $h^{uk}$ onto $(\cH_2^{p, n})^{(J + 1) d_m},$ it holds that
\begin{align*}
    \bE \left[\norm{h^{uk} (W)}^2 \right]
    =
    \bE \left[\norm{h^{p, n} (W)}^2 \right]
    +
    \bE \left[\norm{h^{uk} (W) - h^{p, n} (W)}^2 \right]
    .
\end{align*}
By $(3),$ we have
\begin{align*}
    \bE \left[
        \norm{h^{p, n} (W)}^2 
    \right]
    =
    \bE \left[
        \norm{h^{p, n} (W)^{\otimes 2}}_F
    \right]
    \to
    \bE \left[
        \norm{h^{uk} (W)^{\otimes 2}}_F
    \right]
    =
    \bE \left[
        \norm{h^{uk} (W)}^2 
    \right]
    ,
\end{align*}
which implies $(2).$
\end{proof}

\begin{proof} [of Theorem \ref{thm:many-param-limit}]
It follows from the equivalence between conditions (1) and (3).
\end{proof}

\subsection{Proof of Proposition \ref{prop:full-logit}}
\begin{proof}
Since it holds
\begin{align*}
    \frac{\partial}{\partial \Gamma_{k, j}} \at{\gamma = \gamma^\ast} \log p_j (x; \gamma)
    =
    b_k (x) (1 - p_j^\ast (x))
    ,
\end{align*}
the score of this parametric model is 
\begin{align*}
    S_j^{\ast n} (x)
    &=
    \frac{\partial}{\partial \gamma^{(n)}}
    \at{\gamma = \gamma^\ast}
    \log p_{j} (x; \gamma)
    \\
    &=
    \begin{pmatrix}
        \frac{\partial}{\partial \Gamma_{1, j}} \at{\gamma = \gamma^\ast} \log p_j (x; \gamma) \\
        \vdots \\
        \frac{\partial}{\partial \Gamma_{n, j}} \at{\gamma = \gamma^\ast} \log p_j (x; \gamma)
    \end{pmatrix}
    \otimes
    e_j^J
    \\
    &=
    (1 - p_j^\ast (x))
    \begin{pmatrix}
        b_1 (x) \\
        \vdots \\
        b_n (x)
    \end{pmatrix}
    \otimes
    e_j^J
\end{align*}
for $1 \leq j \leq J,$ and therefore,
\begin{align*}
    \bS_n (x)
    =
    \begin{pmatrix}
        S_1^{\ast n} (x) & \dots & S_J^{\ast n} (x)
    \end{pmatrix}
    =
    \begin{pmatrix}
        b_1 (x) \\
        \vdots \\
        b_n (x)
    \end{pmatrix}
    \otimes
    \begin{pmatrix}
        1 - p_1^\ast (x) && \\
        &\ddots& \\
        && 1 - p_J^\ast (x)
    \end{pmatrix}
    .
\end{align*}
For $c_n = (c_{n1}^\prime, \dots, c_{nn}^\prime)^\prime \in \bR^{d (n)},$ where $c_{nk} \in \bR^J,$ it holds that
\begin{align*}
    c_n^\prime \bS_n (x)
    =
    \begin{pmatrix}
        (1 - p_1^\ast (x)) \sum_{k = 1}^n c_{nk1} b_k (x) & \dots & (1 - p_J^\ast (x)) \sum_{k = 1}^n c_{nkJ} b_k (x)
    \end{pmatrix}
    .
\end{align*}
Thus, the $i$th element of $c_n^\prime \bS_n (x) - \bar \ve_{j, \ell} (x)$ is
\begin{align*}
    (1 - p_i^\ast (x))
    \left(
        \sum_{k = 1}^n c_{nki} b_k (x)
        -
        \frac{\bI\{i \in \cS\} - p_{\cS}^{\ast} (x)}{1 - p_i^\ast (x)}
        e_{j, \ell}^{\ast} (x; \beta_j^\ast)
    \right)
    ,
\end{align*}
which implies that the condition in the statement is equivalent to Condition \ref{cond:score-is-flexible}.
We obtain the conclusion by Theorem \ref{thm:many-param-limit}.
\end{proof}

\subsection{Proof of Proposition \ref{prop:deg-logit}} \label{app:proof-deg-logit}

\begin{proof}
Let
\begin{align*}
    \bar p^\ast (x)
    \coloneqq
    \frac{
        \exp(\sum_{k = 1}^\infty \tilde \gamma_k^\ast b_k (x))
    }{1 + J \exp(\sum_{k = 1}^\infty \tilde \gamma_k^\ast b_k (x))}
    .
\end{align*}
In a similar manner to the proof of Proposition \ref{prop:full-logit}, it can be shown that
\begin{align*}
    \bS_n (x)
    =
    \begin{pmatrix}
        b_1 (x) (1 - J \bar p^\ast (x)) & \dots & b_1 (x) (1 - J \bar p^\ast (x)) \\
        &\vdots& \\
        b_J (x) (1 - J \bar p^\ast (x)) & \dots & b_J (x) (1 - J \bar p^\ast (x)) \\
    \end{pmatrix}
    ,
\end{align*}
and for $c_n \in \bR^{d (n)},$
\begin{align*}
    c_n^\prime \bS_n (x)
    =
    \begin{pmatrix}
        (1 - J \bar p^\ast (x)) \sum_{k = 1}^n c_{nk} b_k (x) & \dots & (1 - J \bar p^\ast (x)) \sum_{k = 1}^n c_{nk} b_k (x)
    \end{pmatrix}
    .
\end{align*}
As the rank of this vector is at most one, it cannot approximate $\bar \ve_{j, \ell}$ unless it contains only one function, which is not the case if $\cS \cap \{1, \dots, J\} \neq \emptyset$ and $\{1, \dots, J\} \setminus \cS \neq \emptyset.$
Hence, we have the conclusion.
\end{proof}

\begin{remark}
When the treatment is binary, i.e., $J = 1,$ $\bar \ve_{j, \ell}$ contains only one function.
Thus, if the dictionary $b_n$ is rich enough, $c_n^\prime \bS_n$ can approximate $\bar \ve_{j, \ell},$ which implies that Condition \ref{cond:score-is-flexible} holds and $V^{p, \infty} = V^{uk}.$
\end{remark}

\section{Applying Theorem \ref{thm:many-param-limit} to Stratified Propensity Scores} \label{app:application-to-stratified-experiments}

In this section, we apply the theory developed in Section \ref{sec:param-vs-nonparam} to stratified propensity scores.
In Section \ref{sec:stratified-experiment}, we have defined them as 
\begin{align*}
    p_j (x)
    =
    \sum_{k = 1}^K
    p_{j, k}
    \bI\{x \in \cX_k\}
\end{align*}
where $p_{j, k}$ are parameters for a fixed partition $\{\cX_k\}_{k = 1}^K.$
Since this parameterization depends on the partition and does not fit the framework in Section \ref{sec:param-vs-nonparam}, we need to redefine stratified propensity scores.

First, we construct a tree-diagram-like index.
Let $\bar \cI = \{0, 1, \dots, \bar I\},$ which is an index set at level $1.$
For each $i_1 \in \bar \cI,$ let $\cI_{i_1} = \{0, 1, \dots, I_{i_1}\}$ be an index set at level $2$ that follows $i_1.$
Again, for each $i_2 \in \cI_{i_1},$ let $\cI_{i_1 i_2} = \{0, 1, \dots, I_{i_1 i_2}\}$ be an index set at level $3$ that follows $i_1i_2,$ and define similar index sets sequentially.
Let  $\cI (n) = \{\vi(n) = i_1 \dots i_n \mid i_1 \in \bar \cI, \ i_{k + 1} \in \cI_{i_1 \dots i_k} \ \text{for} \ 1 \leq k \leq n - 1\}$ be the set of possible indices with size $n.$ 
For example, if $\bar \cI = \{0, 1, 2\}, \ \cI_0 = \{0, 1, 2\}, \ \cI_1 = \{0, 1\}, \ \cI_2 = \{0, 1\}, \dots,$ then $\cI(1) = \{0, 1, 2\}$ and $\cI(2) = \{00, 01, 02, 10, 11, 20, 21\}.$
Finally, let $\cI = \cup_{n = 1}^\infty \cI(n)$ be the set of all possible indices.

Consider a family of measurable subsets $\{\cX_{\vi}\}_{\vi \in \cI}$ of the covariate space $\cX$ such that
\begin{align*}
    \cX_{i_1 \dots i_{n - 1}}
    =
    \bigsqcup_{i_n \in \cI_{i_1 \dots i_{n - 1}}} \cX_{i_1 \dots i_{n - 1} i_n}
    \ \text{for} \ n \geq 2,
    \ \text{and} \ 
    \cX 
    = 
    \bigsqcup_{i_1 \in \bar \cI} \cX_{i_1}
    .
\end{align*}
Notice that $\{\cX_{\vi (n)}\}_{\vi(n) \in \cI(n)}$ gives a partition of $\cX$ for each $n,$ and that it is nested in the sense that if $m \geq n,$ then for any $\vi(m) \in \cI(m),$ there uniquely exists $\vi(n) \in \cI(n)$ such that $\cX_{\vi(m)} \subset \cX_{\vi(n)}.$
Also, the first $m$ numbers of such $\vi(n)$ are exactly $\vi(m).$
In particular, $\{\cX_{\vi (n)}\}_{\vi (n) \in \cI (n)}$ is a partition of $\cX$ at level $n.$
\begin{example} \label{eg:partition}
Figure \ref{fig:partition} describes such partitions up to level $2$ for $\bar \cI = \{0, 1, 2\}, \ \cI_0 = \{0, 1, 2\}, \ \cI_1 = \{0, 1\}, \ \cI_2 = \{0, 1\}, \dots.$
As in the figure, it holds that
\begin{align*}
    \cX
    =
    \cX_0 \sqcup \cX_1 \sqcup \cX_2
    , \ 
    \cX_0 
    =
    \cX_{00} \sqcup \cX_{01} \sqcup \cX_{02}
    , \ 
    \cX_1
    =
    \cX_{10} \sqcup \cX_{11}
    , \ \text{and} \ 
    \cX_2
    =
    \cX_{20} \sqcup \cX_{21}
    .
\end{align*}
\end{example}

\begin{figure}
    \centering
    \begin{subfigure}[b]{0.3\textwidth}
        \centering
        \begin{tikzpicture}[scale=0.8]
            \draw[line width=1pt, fill=black, fill opacity=0.1] (0,0) rectangle (5,5);
            
            \node at (0.5,4.5) {$\cX$};
        \end{tikzpicture}
        \caption{Level $0: \cX$}
    \end{subfigure}
    \begin{subfigure}[b]{0.3\textwidth}
        \centering
        \begin{tikzpicture}[scale=0.8]
            \draw[line width=1pt] (0,0) rectangle (5,5);
            \draw [line width=0pt, fill=black, fill opacity=0.1] (0,0) -- (5,0) -- (2.5,2.5) -- (0,2.5) -- cycle;
            \draw [line width=0pt, fill=black, fill opacity=0.1] (5,0) -- (5,5) -- (2.5,5) -- (2.5,2.5) -- cycle;

            \draw[line width=1pt] (0,2.5) to (2.5,2.5);
            \draw[line width=1pt] (2.5,5) to (2.5,2.5);
            \draw[line width=1pt] (5,0) to (2.5,2.5);

            \node at (0.5,4.5) {$\cX_0$};
            \node at (0.5,2) {$\cX_1$};
            \node at (3,4.5) {$\cX_2$};
        \end{tikzpicture}
        \caption{Level $1:  \{\cX_{\vi(1)}\}_{\vi(1) \in \cI(1)}$}
    \end{subfigure}
    \begin{subfigure}[b]{0.3\textwidth}
        \centering
        \begin{tikzpicture}[scale=0.8]
            \draw[line width=1pt] (0,0) rectangle (5,5);
            \draw [line width=0pt, fill=black, fill opacity=0.1] (0,5) -- (1.25,3.75) -- (2.5,3.75) -- (2.5,5) -- cycle;
            \draw [line width=0pt, fill=black, fill opacity=0.1] (1.25,2.5) -- (2.5,2.5) -- (2.5,3.75) -- (1.25,3.75) -- cycle;
            \draw [line width=0pt, fill=black, fill opacity=0.1] (0,0) -- (5,0) -- (3.75,1.25) -- (0,1.25) -- cycle;
            \draw [line width=0pt, fill=black, fill opacity=0.1] (5,0) -- (5,5) -- (3.75,5) -- (3.75,1.25) -- cycle;
            
            \draw[line width=1pt] (0,2.5) to (2.5,2.5);
            \draw[line width=1pt] (2.5,5) to (2.5,2.5);
            \draw[line width=1pt] (5,0) to (2.5,2.5);
            \draw[line width=1pt] (0,5) to (1.25,3.75);
            \draw[line width=1pt] (1.25,2.5) to (1.25,3.75);
            \draw[line width=1pt] (1.25,3.75) to (2.5,3.75);
            \draw[line width=1pt] (0,1.25) to (3.75,1.25);
            \draw[line width=1pt] (3.75,5) to (3.75,1.25);

            \node at (0.5,4) {$\cX_{00}$};
            \node at (1.2,4.5) {$\cX_{01}$};
            \node at (1.7,3.3) {$\cX_{02}$};
            \node at (0.5,2) {$\cX_{10}$};
            \node at (0.5,0.8) {$\cX_{11}$};
            \node at (3,4.5) {$\cX_{20}$};
            \node at (4.3,4.5) {$\cX_{21}$};
        \end{tikzpicture}
        \caption{Level $2: \{\cX_{\vi(2)}\}_{\vi(2) \in \cI(2)}$}
    \end{subfigure}
    \caption{An example of nested partitions of $\cX.$ Cells associated with indices in $\tilde \cI (n)$ are shaded.}
    \label{fig:partition}
\end{figure}
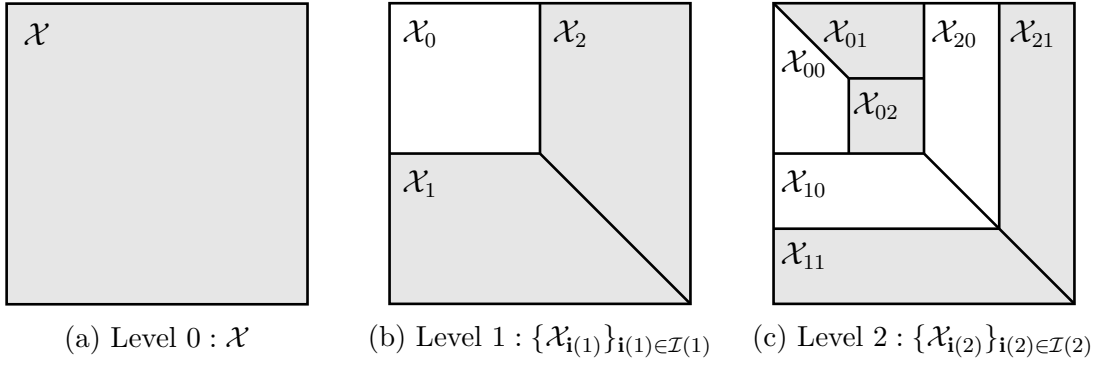

Let $\tilde \cI(n) = \{\vi(n) \in \cI(n) \mid i_n \neq 0\}$ and $\tilde \cI = \cup_{n = 1}^\infty \tilde \cI (n).$
We (re)define a stratified propensity score parameterized by $\gamma = \left(\bar \gamma^\prime, (\gamma_{\vi(1)}^\prime)_{\vi(1) \in \tilde \cI(1)}^\prime, (\gamma_{\vi(2)}^\prime)_{\vi(2) \in \tilde \cI(2)}^\prime, \dots \right)^\prime,$ where $\bar \gamma = (\bar \gamma^1, \dots, \bar \gamma^J)^\prime \in \bR^J$ and $\gamma_{\vi} = (\gamma_{\vi}^1, \dots, \gamma_{\vi}^J)^\prime \in \bR^J$ for each $\vi \in \tilde \cI,$ as
\begin{align} \label{eq:strat-prop-score}
    p_j (x; \gamma)
    =
    \bar \gamma^j
    +
    \sum_{n = 1}^\infty
    \sum_{\vi (n) \in \tilde \cI (n)}
    \gamma_{\vi (n)}^j
    \bI \{x \in \cX_{\vi (n)}\}
\end{align}
for $1 \leq j \leq J,$ and $p_0 (x; \gamma) = 1 - \sum_{j = 1}^J p_j (x; \gamma).$
The base model here is $\cM_\infty^p = \{(p_j (\cdot; \gamma))_{j \in \cT}\}.$
The induced nested parametric models are
\begin{align*}
    \cM_n^p
    =
    \left\{
        (p_j(\cdot; \gamma))_{j \in \cT}
        \mid
        k \geq n + 1
        \Rightarrow
        \gamma^j_{\vi(k)} = 0
        \ \text{for} \ 1 \leq j \leq J
    \right\}
    ,
\end{align*}
and $d (n) = \left(1 + \sum_{k = 1}^n |\tilde \cI(k)|\right)J.$

Let us see why this parameterization makes sense to define a stratified propensity score.
If $k \geq n + 1 \Rightarrow \gamma^j_{\vi(k)} = 0$ is satisfied, the propensity score (\ref{eq:strat-prop-score}) can be written as
\begin{align} \label{eq:strat}
    p_j (x; \gamma)
    =
    \sum_{\vi(n) \in \cI(n)}
    p_{\vi(n)}^j
    \bI \{x \in \cX_{\vi(n)}\}
\end{align}
where
\begin{align} \label{eq:param-strat-prop-score}
    p_{\vi(n)}^j
    =
    \bar \gamma^j
    +
    \sum_{k = 1}^n
    \sum_{\vi(k) \in \tilde \cI(k)}
    \gamma_{\vi(k)}^j
    \bI\{\cX_{\vi(n)} \subset \cX_{\vi(k)}\}
    ,
\end{align}
which implies that $p_j (\cdot; \gamma)$ is stratified in the sense of Section \ref{sec:stratified-experiment}.
Conversely, suppose a propensity score satisfies (\ref{eq:strat}) with parameters $(p_{\vi(n)}^j)_{\vi(n) \in \cI(n), j \in \cT}$ satisfying $\sum_{j \in \cT} p_{\vi(n)}^j = 1.$
Then, there uniquely exists $\left(\bar \gamma^j, (\gamma_{\vi(1)}^j)_{\vi(1) \in \tilde \cI(1)}, \dots, (\gamma_{\vi(n)}^j)_{\vi(n) \in \tilde \cI(n)}\right)$ such that it satisfies the relationship (\ref{eq:param-strat-prop-score}).
For these parameters, (\ref{eq:strat}) is written as (\ref{eq:strat-prop-score}).
Therefore, the parameterization (\ref{eq:strat-prop-score}) successfully represents the stratified propensity score.

\begin{remark}
One might wonder why we do not use $\left((p_{\vi(1)}^j)_{\vi(1) \in \cI(1)}, \dots, (p_{\vi(n)}^j)_{\vi(n) \in \cI(n)}\right)$ directly, which seems more straightforward than using $\gamma$'s.
Unfortunately, this causes an over-parameterization problem.
That is, for a given stratified propensity score, there are multiple parameters that produce it.
This is why we put parameters on $\{\cX_{\vi}\}_{\vi \in \tilde \cI}$ rather than $\{\cX_{\vi}\}_{\vi \in \cI}.$
\end{remark}

By applying Theorem \ref{thm:many-param-limit} to this setup, we have the following proposition.
\begin{proposition} \label{prop:full-strat}
For a stratified propensity score, $V^{p, \infty} = V^{uk}$ if $\left(1, \left(\bI\{\cdot \in \cX_{\vi}\}\right)_{\vi \in \tilde \cI}\right)$ spans $L^2 (F_X).$
\end{proposition}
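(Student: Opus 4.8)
The plan is to verify that Condition~\ref{cond:score-is-flexible} holds for the stratified propensity score whenever $\left(1, \left(\bI\{\cdot \in \cX_{\vi}\}\right)_{\vi \in \tilde \cI}\right)$ spans $L^2(F_X)$, and then invoke Theorem~\ref{thm:many-param-limit} to conclude $V^{p,\infty} = V^{uk}$. The heart of the matter is therefore a density/approximation argument: I must show that the target vectors $\bar \ve_{tk}$ defined just before Condition~\ref{cond:score-is-flexible} can be approximated in $(L^2(F_X))^J$ by linear combinations $c_n^\prime \bS_n$ of the rows of the score matrices $\bS_n$, using a \emph{common} coefficient vector across the $J$ columns.

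First I would compute the scores of the stratified model explicitly. From the parameterization~\eqref{eq:strat-prop-score}, each $p_j(x;\gamma)$ is affine (not log-affine) in $\gamma$, so $S_j^{\ast n}(x) = \partial_{\gamma^{(n)}} \log p_j(x;\gamma)\big|_{\gamma = \gamma^\ast} = \frac{1}{p_j^\ast(x)}\,\partial_{\gamma^{(n)}} p_j(x;\gamma)\big|_{\gamma = \gamma^\ast}$. The partial derivatives of $p_j$ with respect to the coordinates $\bar\gamma^{j'}$ and $\gamma_{\vi(k)}^{j'}$ are, up to sign conventions coming from the $j=0$ coordinate, indicator functions $\bI\{x \in \cX_{\vi(k)}\}$ (and the constant $1$) attached to the matching treatment index. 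The key structural observation to record is that each row of $\bS_n$ is, after the $1/p_{j}^\ast$ weighting, supported on a single indicator cell, and that as $n \to \infty$ the available indicators $\{\bI\{\cdot \in \cX_{\vi}\}\}_{\vi \in \tilde\cI}$ together with the constant span a set whose closed linear span is all of $L^2(F_X)$ by hypothesis.

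Next I would construct the approximating coefficients. Fix $t$ and $k$. The target $\bar\ve_{tk}$ has $j$th entry $(\bI\{j \in \cS\} - p_\cS^\ast(x))\,e_{tk}^\ast(x;\beta_t^\ast)$, so across the $J$ columns it takes only two possible functional forms, $(1 - p_\cS^\ast)e_{tk}^\ast$ on coordinates $j \in \cS$ and $-p_\cS^\ast e_{tk}^\ast$ on coordinates $j \notin \cS$. I would expand both of these $L^2(F_X)$ functions against the spanning system, then—exploiting that the stratified scores carry the \emph{same} indicator pattern simultaneously into every treatment coordinate through the shared $\bar\gamma$ and $\gamma_{\vi(n)}$ parameters—assemble a single $c_n$ that reproduces the correct function in each of the $J$ coordinates. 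This is where the density hypothesis on $\left(1,(\bI\{\cdot\in\cX_{\vi}\})_{\vi\in\tilde\cI}\right)$ is used, and it is what makes the common-coefficient requirement of Condition~\ref{cond:score-is-flexible} attainable rather than merely the weaker column-by-column approximation discussed after the condition.

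The main obstacle I anticipate is precisely this common-coefficient constraint together with the $p_j^\ast$-reweighting in the scores. Because $S_j^{\ast n}$ equals $\partial p_j / p_j^\ast$ rather than a bare indicator, I cannot directly match $\bar\ve_{tk}$ cell by cell; I would instead absorb the weights by noting that on each fixed cell $\cX_{\vi(n)}$ the propensity score is constant in the submodel $\cM_n^p$, so the reweighting is piecewise constant and can be folded into the choice of $c_n$ as the partition is refined. Verifying that the two distinct target forms can be hit \emph{simultaneously} across all coordinates with one coefficient vector, and that the approximation error in $(L^2(F_X))^J$ genuinely tends to zero as the partition refines, is the technical crux; once that is established, Theorem~\ref{thm:many-param-limit} delivers $V^{p,\infty} = V^{uk}$ immediately.
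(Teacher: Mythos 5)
Your overall route coincides with the paper's: compute the scores of the stratified model, verify Condition \ref{cond:score-is-flexible}, and invoke Theorem \ref{thm:many-param-limit}; your score formula and your treatment of the $1/p_j^\ast$ weights (which can indeed be absorbed into the coefficients, or handled by approximating $p_j^\ast$ times the target and using $p_j^\ast \geq p_{\mathrm{min}}$) are fine. The genuine gap is at the step you yourself call the technical crux, which your proposal defers rather than proves: why a \emph{single} coefficient vector $c_n$ can reproduce the two distinct entries of $\bar \ve_{tk}$ --- $(1 - p_\cS^\ast) e_{tk}^\ast$ on coordinates $j \in \cS$ and $-p_\cS^\ast e_{tk}^\ast$ on coordinates $j \notin \cS$ --- simultaneously. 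The mechanism you offer, that the scores ``carry the same indicator pattern simultaneously into every treatment coordinate through the shared $\bar\gamma$ and $\gamma_{\vi(n)}$ parameters,'' is backwards. In the parameterization (\ref{eq:strat-prop-score}) nothing is shared across treatments: each $j \in \{1, \dots, J\}$ has its own block $\left(\bar\gamma^j, (\gamma_{\vi}^j)_{\vi}\right)$, and $p_j(\cdot\,; \gamma)$ depends on that block only. Hence every row of $\bS_n$ vanishes in all columns but one; in the paper's notation, $S_j^{\ast n}(x) = \frac{1}{p_j^\ast(x)} \left(1, \left(\bI\{x \in \cX_{\vi(k)}\}\right)_{\vi(k) \in \tilde\cI(k),\, k \leq n}\right)^\prime \otimes e_j^J$. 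Because of this $\otimes\, e_j^J$ factor, $c_n$ splits into $J$ blocks, each acting on exactly one column, so the common-coefficient requirement of Condition \ref{cond:score-is-flexible} is non-binding here: column-by-column approximation, which the spanning hypothesis delivers immediately, already yields a valid common $c_n$. That one observation is what your proposal is missing, and it is essentially the entire content of the paper's proof.

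The distinction is not cosmetic. If the parameters of the stratified model really were shared across treatment coordinates, as your sketch asserts, then each row of $\bS_n$ would be nonzero in every column, and $c_n^\prime \bS_n$ would equal one and the same function $\sum_{\vi} c_{n\vi}\, \bI\{\cdot \in \cX_{\vi}\}$ divided by $p_j^\ast$ in each coordinate $j$; no choice of $c_n$ could then approximate the two genuinely different target functions at once when $\cS \cap \{1, \dots, J\}$ and $\{1, \dots, J\} \setminus \cS$ are both nonempty. That rank-one obstruction is precisely how the paper proves the \emph{negative} results, Propositions \ref{prop:deg-logit} and \ref{prop:deg-strat}. Taken literally, your justification would therefore place the stratified model on the wrong side of the dichotomy and contradict the proposition; replacing it with the block-diagonal structure of $\bS_n$ makes your argument correct and identical to the paper's.
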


\begin{proof}
The score of $\cM_n^p$ is
\begin{align*}
    S_j^{\ast n} (x)
    =
    \frac{\partial}{\partial \gamma^{(n)}}
    \at{\gamma = \gamma^\ast}
    \log p_{j} (x; \gamma)
    =
    \frac{1}{p_j^\ast (x)}
    \begin{pmatrix}
        1 \\
        \left(\bI\{x \in \cX_{\vi(1)}\}\right)_{\vi(1) \in \tilde \cI(1)} \\
        \vdots \\
        \left(\bI\{x \in \cX_{\vi(n)}\}\right)_{\vi(n) \in \tilde \cI(n)}
    \end{pmatrix}
    \otimes
    e_j^J
    ,
\end{align*}
and
\begin{align*}
    \bS_n (x)
    &=
    \begin{pmatrix}
        S_1^{\ast n} (x) & \dots & S_J^{\ast n} (x)
    \end{pmatrix}
    \\
    &=
    \begin{pmatrix}
        1 \\
        \left(\bI\{x \in \cX_{\vi(1)}\}\right)_{\vi(1) \in \tilde \cI(1)} \\
        \vdots \\
        \left(\bI\{x \in \cX_{\vi(n)}\}\right)_{\vi(n) \in \tilde \cI(n)}
    \end{pmatrix}
    \otimes
    \begin{pmatrix}
        1 / p_1^\ast (x) && \\
        &\ddots& \\
        && 1 / p_J^\ast (x)
    \end{pmatrix}
    \in
    \bR^{d (n) \times J}
    .
\end{align*}
Therefore, for $c_n = ((c_{n0j})_{1 \leq j \leq J}^\prime, (c_{n \vi(1) j})_{1 \leq j \leq J, \vi(1) \in \tilde \cI(1)}^\prime, \dots, (c_{n \vi(n) j})_{1 \leq j \leq J, \vi(n) \in \tilde \cI(n)}^\prime)^\prime \in \bR^{d (n)},$
\begin{align*}
    c_n^\prime \bS_n (x)
    =
    \begin{pmatrix}
        \displaystyle
        \frac{c_{n01}}{p_1^\ast (x)}
        + 
        \sum_{k = 1}^n 
        \sum_{\vi(k) \in \tilde \cI(k)} 
        \frac{c_{n \vi(k) 1}}{p_1^\ast (x)} \bI\{x \in \cX_{\vi(k)}\}
        & \dots & 
        \displaystyle
        \frac{c_{n0J}}{p_J^\ast (x)}
        + 
        \sum_{k = 1}^n 
        \sum_{\vi(k) \in \tilde \cI(k)} 
        \frac{c_{n \vi(k) J}}{p_J^\ast (x)} \bI\{x \in \cX_{\vi(k)}\}
    \end{pmatrix}
    .
\end{align*}
If  $\left(1, \left(\bI\{\cdot \in \cX_{\vi}\}\right)_{\vi \in \tilde \cI}\right)$ spans $L^2 (F_X),$ then Condition \ref{cond:score-is-flexible} is satisfied.
Hence, the conclusion follows.
\end{proof}

For $\left(1, \left(\bI\{\cdot \in \cX_{\vi}\}\right)_{\vi \in \cI}\right)$ to span $L^2 (F_X),$ each $\cX_{\vi}$ must get finer as the level of the index gets deeper.
This statement implies that if this holds, knowing how the covariate space is split is not beneficial in terms of efficiency.

The following gives an insightful sufficient condition for a parametric restriction to improve the efficiency even asymptotically.
\begin{proposition} \label{prop:deg-strat}
For all $j \in \cT$ and $\ell \in \{1, \dots, d_m\},$ suppose that $e_{j, \ell}^{\ast} (\cdot; \beta^\ast)$ is not constant on any open ball in $\cX.$
For a stratified propensity score, $V^{p, \infty} < V^{uk}$ if 
\begin{align*}
    \lim_{n \to \infty} \max_{\vi(n) \in \cI(n)} \mathrm{vol} (\cX_{\vi(n)})
    >
    0
    .
\end{align*}
\end{proposition}

\begin{proof}
Let $\delta = \lim_{n \to \infty} \max_{\vi (n) \in \cI (n)} \mathrm{vol} (\cX_{\vi (n)}) > 0.$
Then, there exists a sequence $\vi (n) \in \cI (n)$ such that $\cX_{\vi(n)} \supset \cX_{\vi(n + 1)}$ and $\lim_{n \to \infty} \mathrm{vol} (\cX_{\vi (n)}) = \delta.$
Let $\cX_\delta = \cap_n \cX_{\vi (n)}.$
Since $\mathrm{vol} (\cX_\delta) > 0,$ it contains an open ball, so that $e_{j, \ell}^{\ast} (\cdot; \beta_j^\ast)$ is not constant on $\cX_\delta.$
Hence, $c_n^\prime \bS_n$ cannot approximate $\bar \ve_{j, \ell}$ on $\cX_\delta.$
\end{proof}

The condition on the partition holds if, for example, there is a part where the partition does not become finer than a certain level, or formally, if there exist $n$ and a sequence $(\vi(m) \in \cI(m))_{m \geq n}$ such that $\cX_{\vi(n)} = \cX_{\vi(m)}.$
This is exactly when one assumes that the true propensity score is constant on an unignorable region $\cX_{\vi(n)},$ which consequently restricts the space of possible propensity scores by removing the possibility that it takes different values there.

\section{Numerical Studies}
In this section, we evaluate efficiency gains from knowing the propensity score numerically.
For a binary treatment, we consider semiparametric efficiency bounds of the ATT, $\bE [Y_1 - Y_0 \mid T = 1],$ for different specifications.
The setup, which is partly inspired by \cite{lee2018efficient}, is as follows.
A two-dimensional covariate vector $(X_1, X_2)$ is uniformly distributed on $\cX = (- 0.5, 0.5]^2.$
Potential outcomes are $Y_0 = 3 X_1 - 3 X_2 + U_0$ and $Y_1 = 5 + 5 X_1 + X_2 + U_1,$ where $U_0$ and $U_1$ are drawn from $N (0, 1)$ independently of other variables.
The true propensity score is $p_0^\ast (x) = p_1^\ast (x) = 0.5.$

We assume the propensity score is known to be stratified.
We consider the following two partitions: 
\begin{enumerate}[label=(\arabic*)]
    \item for $n \geq 0,$
    \begin{align*}
        \cI_n^1
        \coloneqq
        \left\{
            \left(-0.5 + \frac{k_1}{2^n}, -0.5 + \frac{k_1 + 1}{2^n}\right] 
            \times
            \left(-0.5 + \frac{k_2}{2^n}, -0.5 + \frac{k_2 + 1}{2^n}\right] 
            \ \bigg | \
            0 \leq k_1, k_2 \leq 2^n - 1
        \right\}
    \end{align*}
    \item $\cI_0^2 = \{\cX\},$ and for $n \geq 1,$
    \begin{align*}
        \cI_n^2
        \coloneqq
        \text{
            \footnotesize
            $
            \left\{
                \left(-0.5 + \frac{k_1}{2^n}, -0.5 + \frac{k_1 + 1}{2^n}\right] 
                \times
                \left(-0.5 + \frac{k_2}{2^n}, -0.5 + \frac{k_2 + 1}{2^n}\right] 
                \ \bigg | \
                0 \leq k_1 \leq 2^{n - 1} - 1
                , \
                0 \leq k_2 \leq 2^n - 1
            \right\}
            $
        }
        \\
        \cup
        \left\{
            \left(0, 0.5\right] 
            \times
            (-0.5, 0.5]
        \right\}
        .
    \end{align*}
\end{enumerate}

Partition (1) equally splits $\cX,$ and each cell gets smaller as $n$ increases.
By Proposition \ref{prop:limit-stratified-propsnsoty-score-fine}, we know that the efficiency gain $V_{\mathrm{ATT}}^{uk} - V_{\mathrm{ATT}}^p$ from knowing the partition vanishes as $n$ goes to the infinity.
In partition (2), on the other hand, the right half of $\cX$ remains coarse.
By Proposition \ref{prop:limit-stratified-propsnsoty-score-coarse}, the efficiency gain should stay positive even for large $n.$
Note that the number of cells in partition (1) is $K = 2^{2 n}$ while that of partition (2) is $K = 1$ if $n = 0$ and $K = 2^{2 n - 1} + 1$ if $n \geq 1.$
From Theorem \ref{thm:speb-stratified} and the formula (\ref{eq:efficiency-bound-ATT}) of the efficiency bound of ATT, for a given partition $\bigsqcup_{k = 1}^K \cX_k,$ we have $V_{\mathrm{ATT}}^k = 8 / 3,$ $V_{\mathrm{ATT}}^{uk} = 13 / 3,$ and
\begin{align*}
    V_{\mathrm{ATT}}^p
    =
    \frac{8}{3}
    +
    \sum_{k = 1}^K
    \left(
        \bE [2 X_1 + 4 X_2 \mid X \in \cX_k]
    \right)^2
    \bP \left(
        X \in \cX_k
    \right)
    .
\end{align*}

\begin{figure}
    \centering
    \includegraphics[width=\linewidth]{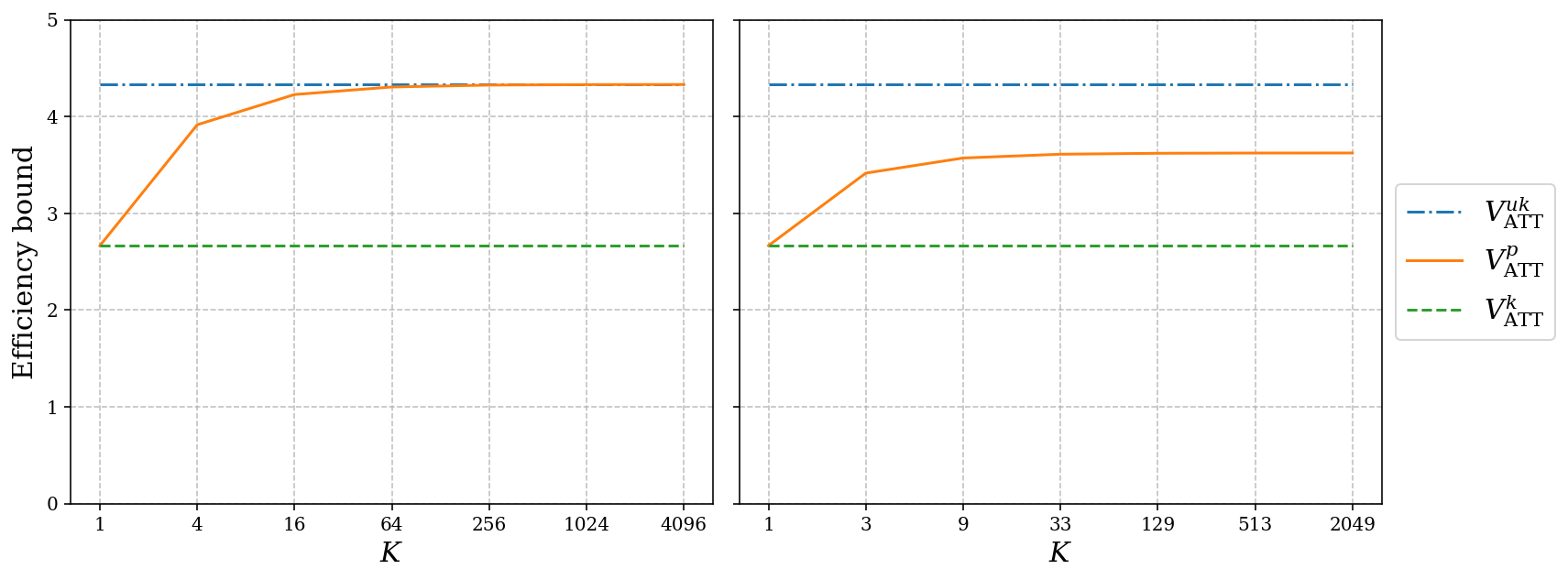}
    \caption{The efficiency bounds. The horizontal axis is the number $K = |\cI_n^i|$ of cells for $i = 1, 2.$ The left panel corresponds to partition (1), and the right panel corresponds to partition (2). The efficiency bounds $V_{\mathrm{ATT}}^{uk}$ (blue, dash-dotted), $V_{\mathrm{ATT}}^p$ (orange, solid), and $V_{\mathrm{ATT}}^k$ (green, dashed) are shown.}
    \label{fig:efficiency-gains}
\end{figure}

Figure \ref{fig:efficiency-gains} reports the three efficiency bounds for different partitions.
As the theory predicts, $V_{\mathrm{ATT}}^{uk} - V_{\mathrm{ATT}}^p$ approaches zero as $K$ becomes large in the left panel while it remains positive in the right panel.
In the left panel, we see that the value of knowing the partition drops drastically, as $K$ increase from $1$ to $4,$ which implies that even relatively coarse partitions are almost useless in terms of efficiency.
In the right panel, the efficiency gain $V_{\mathrm{ATT}}^{uk} - V_{\mathrm{ATT}}^p \approx 0.7$ for large $K$ can be interpreted as the value of knowing that the propensity score is constant on the right half of $\cX.$

\printbibliography

@article{hahn1998role,
  title={On the role of the propensity score in efficient semiparametric estimation of average treatment effects},
  author={Hahn, Jinyong},
  journal={Econometrica},
  pages={315--331},
  year={1998},
  publisher={JSTOR}
}

@article{frolich2004note,
  title={A note on the role of the propensity score for estimating average treatment effects},
  author={Fr{\"o}lich, Markus},
  journal={Econometric Reviews},
  volume={23},
  number={2},
  pages={167--174},
  year={2004},
  publisher={Taylor \& Francis}
}

@book{bickel1993efficient,
  title={Efficient and adaptive estimation for semiparametric models},
    author={Bickel, PJ and Klaassen, CAJ and Ritov, Y and Wellner, JA},
  volume={4},
  year={1993},
  publisher={Springer}
}

@article{hirano2003efficient,
  title={Efficient estimation of average treatment effects using the estimated propensity score},
  author={Hirano, Keisuke and Imbens, Guido W and Ridder, Geert},
  journal={Econometrica},
  volume={71},
  number={4},
  pages={1161--1189},
  year={2003},
  publisher={Wiley Online Library}
}

@article{chen2008semiparametric,
  title={Semiparametric efficiency in GMM models with auxiliary data},
  author={Chen, Xiaohong and Hong, Han and Tarozzi, Alessandro},
  journal={The Annals of Statistics},
  volume={36},
  number={2},
  pages={808--843},
  year={2008},
  publisher={Institute of Mathematical Statistics}
}

@article{lee2018efficient,
  title={Efficient propensity score regression estimators of multivalued treatment effects for the treated},
  author={Lee, Ying-Ying},
  journal={Journal of Econometrics},
  volume={204},
  number={2},
  pages={207--222},
  year={2018},
  publisher={Elsevier}
}

@article{cattaneo2010efficient,
  title={Efficient semiparametric estimation of multi-valued treatment effects under ignorability},
  author={Cattaneo, Matias D},
  journal={Journal of Econometrics},
  volume={155},
  number={2},
  pages={138--154},
  year={2010},
  publisher={Elsevier}
}

@article{farrell2015robust,
  title={Robust inference on average treatment effects with possibly more covariates than observations},
  author={Farrell, Max H},
  journal={Journal of Econometrics},
  volume={189},
  number={1},
  pages={1--23},
  year={2015},
  publisher={Elsevier}
}

@article{chernozhukov2018double,
  title={Double/debiased machine learning for treatment and structural parameters.},
  author={Chernozhukov, Victor and Chetverikov, Denis and Demirer, Mert and Duflo, Esther and Hansen, Christian and Newey, Whitney and Robins, James},
  journal={Econometrics Journal},
  volume={21},
  number={1},
  year={2018}
}

@article{firpo2007efficient,
  title={Efficient semiparametric estimation of quantile treatment effects},
  author={Firpo, Sergio},
  journal={Econometrica},
  volume={75},
  number={1},
  pages={259--276},
  year={2007},
  publisher={Wiley Online Library}
}

@article{chernozhukov2022locally,
  title={Locally robust semiparametric estimation},
  author={Chernozhukov, Victor and Escanciano, Juan Carlos and Ichimura, Hidehiko and Newey, Whitney K and Robins, James M},
  journal={Econometrica},
  volume={90},
  number={4},
  pages={1501--1535},
  year={2022},
  publisher={Wiley Online Library}
}

@article{imbens2000role,
  title={The role of the propensity score in estimating dose-response functions},
  author={Imbens, Guido W},
  journal={Biometrika},
  volume={87},
  number={3},
  pages={706--710},
  year={2000},
  publisher={Oxford University Press}
}

@article{rosenbaum1983central,
  title={The central role of the propensity score in observational studies for causal effects},
  author={Rosenbaum, Paul R and Rubin, Donald B},
  journal={Biometrika},
  volume={70},
  number={1},
  pages={41--55},
  year={1983},
  publisher={Oxford University Press}
}

@article{rosenbaum1984reducing,
  title={Reducing bias in observational studies using subclassification on the propensity score},
  author={Rosenbaum, Paul R and Rubin, Donald B},
  journal={Journal of the American statistical Association},
  volume={79},
  number={387},
  pages={516--524},
  year={1984},
  publisher={Taylor \& Francis}
}

@article{newey1990semiparametric,
  title={Semiparametric efficiency bounds},
  author={Newey, Whitney K},
  journal={Journal of applied econometrics},
  volume={5},
  number={2},
  pages={99--135},
  year={1990},
  publisher={Wiley Online Library}
}

@article{yang2016propensity,
  title={Propensity score matching and subclassification in observational studies with multi-level treatments},
  author={Yang, Shu and Imbens, Guido W and Cui, Zhanglin and Faries, Douglas E and Kadziola, Zbigniew},
  journal={Biometrics},
  volume={72},
  number={4},
  pages={1055--1065},
  year={2016},
  publisher={Wiley Online Library}
}

@article{imai2004causal,
  title={Causal inference with general treatment regimes: Generalizing the propensity score},
  author={Imai, Kosuke and Van Dyk, David A},
  journal={Journal of the American Statistical Association},
  volume={99},
  number={467},
  pages={854--866},
  year={2004},
  publisher={Taylor \& Francis}
}

@article{chong2016iron,
  title={Iron deficiency and schooling attainment in Peru},
  author={Chong, Alberto and Cohen, Isabelle and Field, Erica and Nakasone, Eduardo and Torero, Maximo},
  journal={American Economic Journal: Applied Economics},
  volume={8},
  number={4},
  pages={222--255},
  year={2016},
  publisher={American Economic Association 2014 Broadway, Suite 305, Nashville, TN 37203-2425}
}

@article{bugni2019inference,
  title={Inference under covariate-adaptive randomization with multiple treatments},
  author={Bugni, Federico A and Canay, Ivan A and Shaikh, Azeem M},
  journal={Quantitative Economics},
  volume={10},
  number={4},
  pages={1747--1785},
  year={2019},
  publisher={Wiley Online Library}
}

@article{hong2020inference,
  title={Inference on finite-population treatment effects under limited overlap},
  author={Hong, Han and Leung, Michael P and Li, Jessie},
  journal={The Econometrics Journal},
  volume={23},
  number={1},
  pages={32--47},
  year={2020},
  publisher={Oxford University Press}
}

@article{herren2023true,
  title={On true versus estimated propensity scores for treatment effect estimation with discrete controls},
  author={Herren, Andrew and Hahn, P Richard},
  journal={arXiv preprint arXiv:2305.11163},
  year={2023}
}

@article{chen2007large,
  title={Large sample sieve estimation of semi-nonparametric models},
  author={Chen, Xiaohong},
  journal={Handbook of econometrics},
  volume={6},
  pages={5549--5632},
  year={2007},
  publisher={Elsevier}
}

@article{bai2024efficiency,
  title={On the efficiency of finely stratified experiments},
  author={Bai, Yuehao and Liu, Jizhou and Shaikh, Azeem M and Tabord-Meehan, Max},
  journal={arXiv preprint arXiv:2307.15181},
  year={2024}
}

@article{cytrynbaum2021designing,
  title={Designing representative and balanced experiments by local randomization},
  author={Cytrynbaum, Max},
  journal={arXiv preprint arXiv:2111.08157},
  year={2021},
  publisher={Technical report}
}

@article{tabord2023stratification,
  title={Stratification trees for adaptive randomisation in randomised controlled trials},
  author={Tabord-Meehan, Max},
  journal={Review of Economic Studies},
  volume={90},
  number={5},
  pages={2646--2673},
  year={2023},
  publisher={Oxford University Press US}
}

\end{document}